\title{A lower bound on opaque sets%
\thanks{%
The work presented here was supported in part by 
JSPS KAKENHI, 
by
the ELC project (Grant-in-Aid for Scientific Research on Innovative Areas, 
MEXT, Japan), 
by 
OTKA under EUROGIGA projects GraDR and ComPoSe 
10-EuroGIGA-OP-003, 
and by 
Swiss National Science Foundation
Grants 200020-144531 and 200021-137574.
}}
\author{Akitoshi Kawamura\inst{1}\and
Sonoko Moriyama\inst{2}\and
Yota Otachi\inst{3}\and
J\'anos Pach\inst{4}}
\institute{University of Tokyo, \email{kawamura@is.s.u-tokyo.ac.jp}\and
Nihon University, \email{moriso@chs.nihon-u.ac.jp}\and 
Japan Advanced Institute of Science and Technology, \email{otachi@jaist.ac.jp}\and 
EPFL, Lausanne and R\'enyi Institute, Budapest, \email{pach@cims.nyu.edu}}
\newcommand{\Rset}{\mathbb R}
\newcommand{\UnitSq}{\square}
\newcommand{\dee}{\mathrm d}
\newcommand{\octagon}{%
\begin{picture}(10,10)(0,1)
  \qbezier(1,1)(5,0)(5,0)
  \qbezier(1,1)(0,5)(0,5)
  \qbezier(1,9)(0,5)(0,5)
  \qbezier(1,9)(5,10)(5,10)
  \qbezier(9,9)(5,10)(5,10)
  \qbezier(9,9)(10,5)(10,5)
  \qbezier(9,1)(10,5)(10,5)
  \qbezier(9,1)(5,0)(5,0)
\end{picture}%
}
\begin{document}

\maketitle

\begin{abstract}
It is proved that the total length of any set 
of countably many rectifiable curves, whose union meets all 
straight lines that intersect the unit square $U$, is at least $2.00002$. 
This is the first improvement on the lower bound of $2$ established by 
Jones in 1964.
A similar bound is proved for all convex sets $U$ other than a triangle.
\end{abstract}

\section{Introduction}

A \emph{barrier} or an \emph{opaque set} 
for $U \subseteq \Rset ^2$ is a set $B \subseteq \Rset ^2$
that intersects every line that intersects $U$. 
For example, when $U$ is a square, 
any of the four sets depicted in Figure~\ref{figure: barriers_square}
is a barrier. 
\begin{figure}[t]
\begin{center}
\includegraphics{./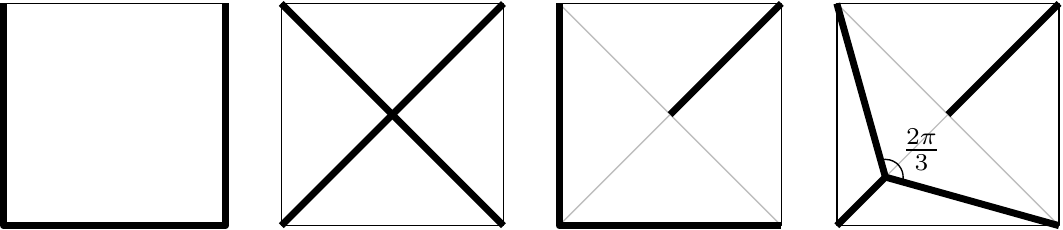}
\caption{Barriers (in thick lines) for the unit square. 
The first one (three sides) and the second one (diagonals) 
have lengths $3$ and $2 \sqrt 2 = 2.828\ldots {}$, respectively.  
The third barrier consists of two sides and half of a diagonal, 
and has length $2 + 1 / \sqrt 2 = 2.707\ldots {}$. 
The last one is the shortest known barrier for the unit square, 
with length $\sqrt 2 + \sqrt 6 / 2 = 2.638\ldots {}$, 
consisting of half a diagonal and the Steiner tree of the lower left triangle. 
}
\label{figure: barriers_square}
\end{center}
\end{figure}
Note that some part of the barrier may lie outside $U$
\begin{figure}
\begin{center}
\includegraphics{./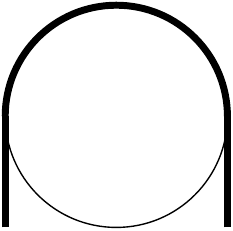}
\caption{A barrier (in thick lines) for a disk
that is shorter than the perimeter.
This is not the shortest one; see \cite{FaberMycielski}.}
\label{figure: barriers_disk}
\end{center}
\end{figure}
(Figure~\ref{figure: barriers_disk}), 
and the barrier need not be connected. 
This notion dates back at least to Mazurkiewicz's work in 1916~%
\cite{Mazurkiewicz}. 

We are interested in ``short'' barriers~$B$ for a given object~$U$, 
and hence we restrict attention to \emph{rectifiable} barriers~$B$. 
By this we mean that $B$ is a union of countably many curves~$\beta$, 
pairwise disjoint except at the endpoints, 
that each have finite length $
\lvert \beta \rvert 
$, and the sum of these lengths converges. 
We call this sum 
the \emph{length} of $B$ and denote it by $\lvert B \rvert$. 

Finding the shortest barrier is hard, 
even for simple shapes~$U$, 
such as the square, the equilateral triangle, and the disk~%
\cite{FaberMycielski, Kawohl}. 
The shortest known barrier for the unit square 
is the rightmost one in Figure~\ref{figure: barriers_square}, 
with length $2.638\ldots {}$. 
This problem and its relatives have an extensive literature. 
See \cite{FaberMycielski, Kawohl2} and the introduction of \cite{DJP10} 
for more history, background, and related problems. 

The best known lower bound for the unit square has been $2$, 
established by Jones in 1964~\cite{Jones}. 
In general, for convex $U$, 
a barrier needs to have length at least 
half the perimeter of $U$
(we review a proof in Section~\ref{section: Jones lower bound}): 

\begin{lemma}
\label{lemma: Jones lower bound}
$\lvert B \rvert \geq p$
for any rectifiable barrier $B$ of 
a convex set $U \subseteq \Rset ^2$ 
with perimeter $2 p$. 
\end{lemma}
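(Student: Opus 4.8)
The plan is to deduce the bound from a single integral-geometric identity, applied once to the boundary of $U$ (where it is Cauchy's formula) and once to the barrier $B$. For a direction $\theta\in[0,\pi)$ let $\rho_\theta\colon\Rset^2\to\Rset$ denote orthogonal projection onto a line of direction $\theta$, and for $S\subseteq\Rset^2$ let $\lvert\rho_\theta(S)\rvert$ be the one-dimensional Lebesgue (outer) measure of its image. For the convex set $U$ this is the width function $w(\theta):=\lvert\rho_\theta(U)\rvert$, and Cauchy's formula asserts $\int_0^\pi w(\theta)\,\dee\theta=2p$; I would either quote this or derive it from the projection computation below applied to the closed curve $\partial U$, which in each direction projects onto an interval of length $w(\theta)$ traversed (essentially) twice, so that $\int_0^\pi 2w(\theta)\,\dee\theta=2\lvert\partial U\rvert=4p$.

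The first ingredient is that $B$ attains the full width of $U$ in every direction. Fix $\theta$ and a point $x\in\rho_\theta(U)$: the line $\rho_\theta^{-1}(x)$ is perpendicular to direction $\theta$ and meets $U$, hence meets $B$, so $x\in\rho_\theta(B)$. Thus $\rho_\theta(U)\subseteq\rho_\theta(B)$ and $\lvert\rho_\theta(B)\rvert\ge w(\theta)$ for every $\theta$. As $\theta$ ranges over $[0,\pi)$ the perpendicular directions range over all of $[0,\pi)$ as well, so this uses the barrier property for every line meeting $U$.

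The second ingredient bounds $\int_0^\pi\lvert\rho_\theta(B)\rvert\,\dee\theta$ from above. Write $B=\bigcup_i\beta_i$ and parametrize each $\beta_i$ by arc length on $[0,\lvert\beta_i\rvert]$, so that $\beta_i$ is $1$-Lipschitz and hence $\rho_\theta\circ\beta_i$ is absolutely continuous with derivative $\cos(\theta-\alpha_i(s))$ at almost every $s$, where $\alpha_i(s)$ is the direction of $\beta_i'(s)$. Since the image of an absolutely continuous real function is an interval whose length is at most the integral of the absolute value of its derivative,
\[
\lvert\rho_\theta(\beta_i)\rvert\ \le\ \int_0^{\lvert\beta_i\rvert}\bigl\lvert\cos(\theta-\alpha_i(s))\bigr\rvert\,\dee s .
\]
Integrating over $\theta\in[0,\pi)$, using Tonelli together with $\int_0^\pi\lvert\cos(\theta-\alpha)\rvert\,\dee\theta=2$, then summing over $i$ (monotone convergence) and using countable subadditivity of measure for $\rho_\theta(B)=\bigcup_i\rho_\theta(\beta_i)$, gives
\[
\int_0^\pi\lvert\rho_\theta(B)\rvert\,\dee\theta\ \le\ \sum_i\int_0^\pi\lvert\rho_\theta(\beta_i)\rvert\,\dee\theta\ \le\ \sum_i 2\,\lvert\beta_i\rvert\ =\ 2\,\lvert B\rvert .
\]
Combining the two ingredients with Cauchy's formula,
\[
2\,\lvert B\rvert\ \ge\ \int_0^\pi\lvert\rho_\theta(B)\rvert\,\dee\theta\ \ge\ \int_0^\pi w(\theta)\,\dee\theta\ =\ 2p ,
\]
whence $\lvert B\rvert\ge p$.

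If anything here is delicate, it is the integral-geometric estimate for \emph{arbitrary} rectifiable curves rather than polygons: one must use that an arc-length parametrization is $1$-Lipschitz (hence differentiable a.e.\ with unit-speed derivative), and that the image of an absolutely continuous function has measure at most its total variation. An alternative is to establish the displayed estimate first for polygons, where it reduces to the elementary identity $\int_0^\pi\ell\lvert\cos(\theta-\alpha)\rvert\,\dee\theta=2\ell$ for a segment of length $\ell$, and then pass to the limit through inscribed polygons. The remaining points — measurability of each $\rho_\theta(\beta_i)$ (each is an interval, being a continuous image of an interval) and of $\theta\mapsto\lvert\rho_\theta(B)\rvert$, and the interchange of summation and integration — are routine once set up. Finally, it is worth noting that working with the \emph{measure} of the projection, rather than with multiplicities, is exactly what permits $B$ to block each line only once and so yields the constant $p$ and not $2p$; squeezing a sliver of that slack for the square is what the improvement announced in the abstract does.
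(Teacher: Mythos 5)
Your proof is correct, and its core is the same integral-geometric argument as the paper's: project $B$ and $U$ in every direction, use the barrier property to get $\lvert \rho_\theta(B)\rvert \ge w(\theta)$, integrate over directions using $\int_0^\pi \lvert\cos(\theta-\alpha)\rvert\,\dee\theta = 2$, and finish with Cauchy's formula for the perimeter. The one genuine difference is how you handle general rectifiable curves: the paper first invokes its Lemma~\ref{lemma: segment barriers} to replace $B$ by a straight barrier of length at most $(1+\varepsilon)\lvert B\rvert$ and then runs the projection computation only for line segments, whereas you work directly with the arc-length parametrization, using that it is $1$-Lipschitz, that $\rho_\theta\circ\beta_i$ is absolutely continuous with derivative $\cos(\theta-\alpha_i(s))$ a.e., and that the measure of the image of an absolutely continuous function is at most its total variation. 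Your route is self-contained and avoids the $\varepsilon$-loss (harmless here, but aesthetically cleaner); the paper's route keeps the computation at the level of elementary segment projections and reuses a reduction lemma it needs anyway for the main theorems. One small caution: your sketched derivation of Cauchy's formula by projecting $\partial U$ ``traversed essentially twice'' needs the projection identity \emph{with multiplicity} (a Banach-indicatrix/coarea statement), which is strictly more than the one-sided inequality you actually establish for $B$; quoting Cauchy's formula outright, as you also offer to do, is the safe choice.
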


Thus, from the point of view of finding short barriers, 
the trivial strategy of 
enclosing the entire perimeter (or the perimeter of the convex hull
if $U$ is a non-convex connected set) gives a $2$-approximation. 
See \cite{DJ13} and references therein
for algorithms that find shorter barriers. 
The current best approximation ratio is 
$1.58\ldots$~\cite{DJP10}. 

Proving a better lower bound has been elusive 
(again, even for specific shapes $U$). 
There has been some partial progress 
under additional assumptions 
about the shape (single arc, connected, etc.)
and location (inside $U$, near $U$, etc.) 
of the barrier~\cite{Croft, DJ14, FaberMycielskiPedersen, Kawohl2, provan}, 
but establishing an unconditional lower bound 
strictly greater than $2$ for the unit square 
has been open (see \cite[Open Problem 5]{DJ13} or \cite[Footnote 1]{DJ14}). 
We prove such a lower bound 
in Section~\ref{section: main}: 

\begin{theorem}
\label{theorem: square}
$\lvert B \rvert \geq 2.00002$ for any rectifiable barrier~$B$ of the unit square~$\UnitSq$. 
\end{theorem}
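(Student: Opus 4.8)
The plan is to sharpen the integral-geometric proof of Lemma~\ref{lemma: Jones lower bound} and then to study the barriers whose length is close to the extremal value~$2$. I would start from the Cauchy--Crofton formula
\[
\lvert B\rvert \;=\; \frac12\int_0^\pi \Bigl(\int_{\Rset} m_\theta(t)\,\dee t\Bigr)\dee\theta ,
\]
where $m_\theta(t)\in\{0,1,2,\dots\}\cup\{\infty\}$ counts the points in which $B$ meets the line of direction~$\theta$ at signed distance~$t$ from the origin. For fixed~$\theta$ the lines meeting~$\UnitSq$ are exactly those with $t$ in an interval of length $w(\theta)=\lvert\cos\theta\rvert+\lvert\sin\theta\rvert$, and on that ``shadow'' interval $m_\theta(t)\ge1$ because $B$ is a barrier; since $\int_0^\pi w(\theta)\,\dee\theta$ is the perimeter~$4$ of~$\UnitSq$, this recovers Jones' bound $\lvert B\rvert\ge2$. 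The task is therefore to bound from below the \emph{excess}
\[
\mathcal E \;:=\; 2\lvert B\rvert-4 \;=\; \int_0^\pi\!\!\int_{\Rset}\bigl(m_\theta(t)-\mathbf 1_{\mathrm{shadow}}(\theta,t)\bigr)\,\dee t\,\dee\theta\;\ge\;0 ,
\]
and it is enough to prove $\mathcal E\ge 4\cdot10^{-5}$; so I would assume $\mathcal E<4\cdot10^{-5}$ and look for a contradiction. A standard compactness/approximation step first reduces the problem to a ``tidy'' barrier — a finite union of polygonal arcs inside a fixed bounded neighbourhood of~$\UnitSq$ — and I would record that a small~$\mathcal E$ also keeps $2\lvert B\rvert-\sum_i\int_0^\pi\lvert J_{\beta_i}(\theta)\rvert\,\dee\theta$ small, i.e.\ forces every component~$\beta_i$ of~$B$ to be close to a straight segment.

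Next I would read off the shape of a near-extremal barrier. Because $m_\theta(t)$ is integer valued, $\mathcal E<4\cdot10^{-5}$ forces that, apart from a set of directions of tiny measure, the line of direction~$\theta$ meets~$B$ exactly once for almost every~$t$ in the shadow and not at all outside it. Taking one such ``good'' direction close to the horizontal and one close to the vertical — they exist because the bad set is small — shows that, after deleting a short subcurve, $B$ is simultaneously a graph over two nearly orthogonal axes, hence essentially a single monotone arc; and since $\overline B$ reaches the two width-extremes of~$\UnitSq$ in every good direction, it comes close to all four corners, so in fact this arc runs essentially from one corner of~$\UnitSq$ to the opposite one. This is contradictory. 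A monotone arc joining two opposite corners meets every chord of~$\UnitSq$ parallel to that diagonal (other than the diagonal itself) an \emph{even} number of times, since it lies on the same side of such a chord at both of its endpoints; so either the arc meets a positive-measure family of these chords \emph{twice} — producing multiplicity~$2$ over a positive-measure set of directions near~$45^\circ$ and hence a definite amount of excess — or it misses almost all of them, and then they must be blocked by the rest of~$B$, whose orthogonal projection onto that diagonal direction must then have length close to~$2$, contradicting that the rest of~$B$ is short. Either way $\mathcal E$ is bounded below by an explicit positive constant, and a deliberately crude accounting makes this constant at least $4\cdot10^{-5}$, giving $\lvert B\rvert\ge2.00002$. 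The same dichotomy applies verbatim to an arbitrary convex~$U$ except when $U$ is a triangle: there a monotone arc joining two extreme points — namely, two sides — \emph{can} block all chords in an intermediate direction, so the obstruction disappears, which is exactly why triangles are excluded.

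The heart of the matter, and the reason only the modest constant $2.00002$ is claimed, is to make the repeated ``essentially'' above quantitative and uniform: to control the errors from ``exactly once for \emph{almost} every~$t$ and \emph{most}~$\theta$'', from the short deleted subcurve, and from $\overline B$ lying only \emph{nearly} inside~$U$, and then to turn the two branches of the dichotomy into honest numerical inequalities that do not collapse as the auxiliary small parameters tend to zero. Everything else is routine integral geometry and bookkeeping.
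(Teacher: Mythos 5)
The first half of your argument is sound and matches the paper's starting point: the Cauchy--Crofton identity recovers Jones' bound, and assuming the excess $\mathcal E$ is small you correctly conclude that for most directions $\theta$ the lines meet $B$ with multiplicity exactly one on the shadow interval and zero off it. The fatal step is the deduction that $B$ ``is simultaneously a graph over two nearly orthogonal axes, \emph{hence} essentially a single monotone arc.'' Being an (a.e.) graph over two directions says nothing about connectedness: a near-extremal barrier may be a countable union of disjoint pieces whose projections onto each of the two axes \emph{tile} the corresponding shadow intervals without overlapping. (For instance, the two segments from $(-\tfrac12,0)$ to $(0,\tfrac12)$ and from $(0,-\tfrac12)$ to $(\tfrac12,0)$ have total length $\sqrt2<2$ and cover both the horizontal and the vertical shadow exactly once each; they fail to be a barrier only in \emph{other} directions.) Your subsequent parity dichotomy --- the arc lies on the same side of a chord at both of its endpoints, hence crosses it an even number of times --- uses connectedness in an essential way and has no analogue for a fragmented barrier. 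This is not a technical omission: lower bounds above $2$ for single arcs and for connected barriers were already known, and the reason the unconditional problem stood open is precisely the disconnected case, which your reduction silently discards.

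The paper's proof confronts exactly that case: it shows (i) that any non-negligible part of the barrier lying far outside the square is quantitatively wasteful (Lemma~\ref{lemma: far outside}), hence (ii) a definite amount of barrier must sit near each corner just to block the near-tangent $45^\circ$ lines, and then (iii) by a pigeonhole on segment directions two of these corner clusters must ``face each other,'' so their projections overlap over a whole interval of directions (Lemma~\ref{lemma: segment groups}); the technical core, Lemma~\ref{lemma: crossing}, lower-bounds the total overlap of the resulting families of Lipschitz bands even when each cluster consists of many interleaved tiny segments. Nothing in your proposal substitutes for steps (ii)--(iii) or for Lemma~\ref{lemma: crossing}. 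A smaller but real secondary gap: your ``standard compactness/approximation step'' confining $B$ to a bounded neighbourhood of $\UnitSq$ is not standard --- pieces of a barrier genuinely may lie arbitrarily far away, and showing that only a short total length of them can do so already requires the quantitative estimate of Lemma~\ref{lemma: far outside}.
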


Dumitrescu and Jiang \cite{DJ14} recently
obtained a lower bound of $2 + 10 ^{-12}$
under the assumption that the barrier lies 
in the square obtained by magnifying $\UnitSq$ by $2$ about its centre. 
Their proof, conceived independently of ours and at about the same time, 
is based on quite different ideas, 
most notably the line-sweeping technique. 
It will be worth exploring whether
their techniques can be combined with ours. 

Our proof can be generalized
(Section~\ref{section: general convex}): 

\begin{theorem}
\label{theorem: general convex}
For any closed convex set $U$ with perimeter~$2 p$
that is not a triangle, 
there is $\varepsilon > 0$ such that 
any barrier~$B$ for $U$ has length at least
$p + \varepsilon$. 
\end{theorem}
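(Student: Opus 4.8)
The plan is to revisit the proof of Theorem~\ref{theorem: square}, isolate the feature of the square that it really uses, and check that every convex body other than a triangle has a (possibly weaker) version of it with room to spare.

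The first ingredient is the quantitative form of Lemma~\ref{lemma: Jones lower bound}. With the standard kinematic normalization $2\lvert B\rvert=\int n_B(\ell)\,\dee\ell$, where $n_B(\ell)$ counts the points of $\ell\cap B$ and the integral runs over all lines, and with the measure of lines meeting $U$ equal to its perimeter $2p$, the barrier condition $n_B(\ell)\ge 1$ on lines through $U$ gives
\[
2\bigl(\lvert B\rvert-p\bigr)=\int\bigl(n_B(\ell)-\mathbf 1[\,\ell\cap U\neq\emptyset\,]\bigr)\,\dee\ell,
\]
a nonnegative integral of a nonnegative integer-valued function. Hence if $\lvert B\rvert\le p+\delta$, then outside a set of lines of measure $O(\delta)$ every line through $U$ meets $B$ exactly once, and (testing against lines disjoint from $U$) all but an $O(\delta)$ fraction of the length of $B$ lies in any prescribed neighbourhood of $U$. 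So a near-minimal barrier looks, coarsely, like a countable union of arcs sitting inside a slight enlargement of $U$ whose shadows tile those of $U$ with multiplicity one in almost every direction --- equivalently, almost every arc is pushed toward being a monotone curve. This ``almost-equality structure'' is shape independent, and it is exactly what the square argument feeds on.

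The second step is the case analysis that turns this structure into a definite deficit when $U$ is not a triangle. If $U$ is a polygon with at least four vertices it has two non-adjacent vertices, whose four incident edges are distinct; I would run the argument of Section~\ref{section: main} with these vertices in the role of two opposite corners of the square, showing that no barrier with the almost-equality structure can block the one-parameter families of lines that cut off these two corners together with the lines nearly parallel to the edges meeting there, without either meeting a positive-measure set of lines twice or depositing a positive length of curve outside $U$ --- either of which contradicts the smallness of the excess. If $U$ is not a polygon, then $\partial U$ contains a strictly convex sub-arc $A$ (one containing no line segment); the almost-equality structure forces $B$ near $A$ to be essentially a single monotone curve following $A$, and strict convexity prevents that curve from simultaneously blocking the lines through $U$ in slightly rotated directions, again at a positive cost. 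Either way one gets $c(U)>0$ with $\lvert B\rvert\ge p+c(U)$ for every barrier, so Theorem~\ref{theorem: general convex} holds with $\varepsilon=c(U)$. A triangle escapes both cases: it is a polygon with only three vertices, so it has no pair of non-adjacent vertices (every two of its vertices are joined by an edge), and its boundary contains no strictly convex arc; correspondingly, its three corner-cutting families are tied together by ``every line through a triangle misses exactly one of its three sides,'' which is just what makes the counting collapse.

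I expect the principal obstacle to be the second step in the non-polygonal case: distilling from the purely measure-theoretic ``crossed once almost everywhere, and inside $U$'' a description of $B$ near a curved arc rigid enough to force the extra length, with estimates uniform over the relevant line families; the polygonal case is likely easier but still requires replacing the square's symmetry by honest estimates. As a sanity check --- not a substitute for the argument --- one can see that the Crofton bound is never attained: equality would make $B$ a countable union of segments inside $\overline U$ whose shadows tile those of $U$ exactly once, but any two non-collinear segments inside a convex body are met simultaneously by a positive measure of lines, and the tiling condition forbids gaps, so $B$ would have to be a single straight segment, which is never a barrier when $U$ is non-degenerate. Thus a shortest barrier, if one were known to exist, would automatically have length $>p$ and hand us $\varepsilon$ for free; but the existence of a shortest opaque set is itself delicate, which is why the quantitative route is the one to pursue --- and why it, like every route known so far, stops just short of triangles.
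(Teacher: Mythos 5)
Your first step is a sound Crofton-language restatement of machinery the paper already has (Lemmas \ref{lemma: waste causes loss} and \ref{lemma: far outside}): a near-minimal barrier must lie almost entirely in a small neighbourhood of $U$ and must cover almost every line through $U$ without multiplicity. The gap is in the second step, where the definite deficit actually has to be produced. The paper's mechanism is Lemma~\ref{lemma: segment groups} applied to \emph{four} separated regions, each forced to contain a definite length of barrier; the pigeonhole over the twelve ordered pairs $(i,j)$, at least eight of which are ``good,'' is what guarantees two regions whose barrier pieces face each other, and this counting fails with fewer than four regions. Your polygonal case uses only \emph{two} non-adjacent vertices: with two corner regions, the barrier at each corner may consist entirely of segments nearly parallel to the line joining the two corners, in which case neither piece faces the other and no double-coverage waste is forced. (For the same reason, your explanation of why triangles escape --- ``no two non-adjacent vertices'' --- is not the operative one; what a triangle lacks is a fourth exposed boundary point.)

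Your non-polygonal case has a more basic problem: there is no \emph{local} deficit to be extracted near a single strictly convex arc $A$. The arc $A \subseteq \partial U$ itself blocks, exactly once and in every direction, all lines through $U$ that pass near $A$, so a monotone curve following $A$ is perfectly efficient locally, and strict convexity does not ``prevent it from blocking slightly rotated lines.'' The waste in this problem is always global: it comes either from barrier mass far outside $U$ or from two well-separated pieces covering the same family of lines twice. The paper's actual proof treats polygons and non-polygons uniformly: pick four exposed points $x_1,\dots,x_4$ of $\partial U$ (these exist precisely when $U$ is not a triangle), replace the octagon by the set $S_\delta$ of points from which a random line misses $U$ with probability less than $\delta$, conclude via Lemma~\ref{lemma: far outside} that a near-minimal barrier must place a definite length in each small neighbourhood $R_i$ of $x_i$, and then run the four-region pigeonhole and Lemma~\ref{lemma: segment groups} exactly as for the square. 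Your closing observation that the Crofton bound is never attained is correct but, as you note yourself, does not yield the uniform $\varepsilon$ the theorem requires.
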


Thus, the only convex objects for which 
we fail to establish a lower bound better than Lemma~\ref{lemma: Jones lower bound} 
are triangles. 

The rest of this paper is structured as follows.
In Section~\ref{section: Jones lower bound}, 
we present Jones' proof for Lemma~\ref{lemma: Jones lower bound}. 
We also prove that instead of rectifiable barriers, 
it is sufficient to restrict our attention to 
barriers comprised of line segments.
In Section~\ref{section: lemmas}, 
we present three preliminary lemmas, 
analyzing some 
important special cases in which we can expect to improve on 
Jones' bound. 
The proof of one of these lemmas 
is postponed to Section~\ref{section: proof of lemmas}. 
The three preliminary lemmas are 
combined in Section~\ref{section: main} 
to obtain our lower bound for the length of
a barrier for the square (Theorem~\ref{theorem: square}). 
In Section~\ref{section: general convex}, we show how 
to generalize these arguments to other convex sets 
(Theorem~\ref{theorem: general convex}).
In the last section, we discuss a closely related question.    

\section{Preliminaries: A general lower bound}
\label{section: Jones lower bound}

For a set $U$ and an angle $\alpha \in [0, 2 \pi)$
(all angle calculation will be performed modulo $2 \pi$), 
we write $U (\alpha) \subseteq \Rset$ for the 
image of $U$ projected onto the line 
passing through the origin 
and enclosing angle $+\alpha$ with the positive x-axis, i.e., 
\begin{equation}
 U (\alpha)
= 
 \bigl\{\,
  x \cos \alpha + y \sin \alpha
 :
  (x, y) \in U
 \,\bigr\}
\end{equation}
(Figure~\ref{figure: projection}). 
\begin{figure}[t]
\begin{center}
\includegraphics{./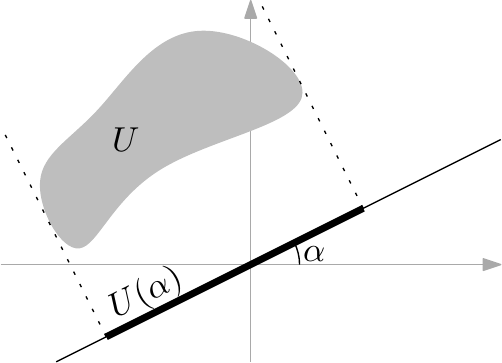}
\caption{The image $U (\alpha) \subseteq \Rset$ of $U$.}
\label{figure: projection}
\end{center}
\end{figure}
To say that $B$ is a barrier of $U$ means that 
$B (\alpha) \supseteq U (\alpha)$ for all $\alpha$. 

For the discussion of upper and lower bounds on the length of a barrier, 
the following lemma says that 
it suffices to consider barriers 
that are a countable union of line segments. 
We call such a barrier \emph{straight}. 

\begin{lemma}[{\cite[Lemma~1]{DJP10}}]
\label{lemma: segment barriers}
Let $B$ be a rectifiable barrier 
for $U \subseteq \Rset ^2$. 
Then, for any $\varepsilon > 0$,
there exists a straight barrier $B _\varepsilon$ for $U$
such that $\lvert B _\varepsilon \rvert \leq (1 + \varepsilon) \lvert B \rvert$.
\end{lemma}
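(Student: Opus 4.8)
The plan is to build $B_\varepsilon$ curve by curve. Writing $B = \bigcup_i \beta_i$, it suffices to produce, for each rectifiable curve $\beta$ of length $L = \lvert\beta\rvert$ and each $\varepsilon>0$, a finite union of segments $\sigma$ with $\lvert\sigma\rvert \le (1+\varepsilon)L$ and $\sigma(\alpha) \supseteq \beta(\alpha)$ for every angle $\alpha$: then $B_\varepsilon := \bigcup_i \sigma_i$ (subdividing segments at their mutual crossings afterwards to make them pairwise disjoint except at endpoints, which never increases total length) satisfies $B_\varepsilon(\alpha) = \bigcup_i \sigma_i(\alpha) \supseteq \bigcup_i \beta_i(\alpha) = B(\alpha) \supseteq U(\alpha)$ and $\lvert B_\varepsilon\rvert = \sum_i \lvert\sigma_i\rvert \le (1+\varepsilon)\sum_i \lvert\beta_i\rvert = (1+\varepsilon)\lvert B\rvert$, so it is a straight barrier for $U$ of the required length.

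To construct $\sigma$ from $\beta$, parametrize $\beta$ by arc length on $[0,L]$ and fix a partition $0 = t_0 < \dots < t_m = L$. For each $j$ let $\gamma_j = \beta|_{[t_{j-1},t_j]}$, let $\ell_j = t_j - t_{j-1} = \lvert\gamma_j\rvert$, let $A_j = \beta(t_{j-1})$, $B_j = \beta(t_j)$, $c_j = \lvert A_jB_j\rvert$, and let $\delta_j$ be the largest distance from a point of $\gamma_j$ to the segment $A_jB_j$. Then $\gamma_j$ is contained in the axis-aligned (in the $A_jB_j$-frame) rectangle obtained by enlarging $A_jB_j$ by $\delta_j$ on each side perpendicularly and by $\delta_j$ past each endpoint; a short case check shows that the union $g_j$ of the central axis of this rectangle and its two short edges --- three segments of total length $c_j + 6\delta_j$ --- already has $g_j(\alpha) \supseteq \gamma_j(\alpha)$ for all $\alpha$ (this works uniformly, even when $c_j = 0$, in which case one simply picks an arbitrary frame). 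Setting $\sigma = \bigcup_j g_j$ then gives $\sigma(\alpha) \supseteq \bigcup_j \gamma_j(\alpha) = \beta(\alpha)$ and $\lvert\sigma\rvert \le \sum_j c_j + 6\sum_j \delta_j$.

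The crux --- and the one place where care is needed --- is bounding $\sum_j \delta_j$: the trivial estimate $\delta_j \le \ell_j$ only yields $\lvert\sigma\rvert \le 7L$, a constant factor, which is useless. Instead one uses the elementary fact that a curve from $A_j$ to $B_j$ of length $\ell_j$ reaching distance $\delta_j$ from $A_jB_j$ satisfies $\ell_j^2 \ge c_j^2 + 4\delta_j^2$ (or $\ell_j \ge c_j + \delta_j$ if the extreme excursion lies "behind'' an endpoint), hence in all cases $\delta_j \le \sqrt{2\ell_j(\ell_j - c_j)}$. By Cauchy--Schwarz, $\sum_j \delta_j \le \sqrt{2}\,\bigl(\sum_j \ell_j\bigr)^{1/2}\bigl(\sum_j(\ell_j - c_j)\bigr)^{1/2} = \sqrt{2L}\,\bigl(L - \sum_j c_j\bigr)^{1/2}$. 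Since $\sum_j c_j$ is the length of the inscribed polygonal path through the points $\beta(t_i)$, it increases to $L$ as the partition is refined; choosing the partition fine enough that $L - \sum_j c_j < \varepsilon^2 L / 72$ makes $6\sum_j \delta_j < \varepsilon L$, so $\lvert\sigma\rvert \le \sum_j c_j + \varepsilon L \le (1+\varepsilon)L$, as required. Thus the real content of the lemma is that the accumulated length of the "correction'' segments, though a priori a constant fraction of $L$, can be forced below $\varepsilon L$ by this deficit-plus-Cauchy--Schwarz estimate.
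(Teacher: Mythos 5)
Your proof is correct, but it takes a genuinely different route from the paper's. The paper argues non-constructively: if $\beta$ is not a segment, its convex hull $H$ has an interior point; magnifying $\beta$ by $1+\varepsilon$ about that point yields a curve $\beta'$ whose convex hull contains $H$ in its interior, and a sufficiently fine inscribed polygonal approximation $\beta''$ of $\beta'$ still has this property, hence blocks every line through $H \supseteq \beta$, while $\lvert\beta''\rvert \leq \lvert\beta'\rvert = (1+\varepsilon)\lvert\beta\rvert$. You instead build the approximation locally: chords of a fine partition plus, for each arc, three ``correction'' segments forming a barrier of a thin bounding rectangle, with the total correction $6\sum_j\delta_j$ controlled via the excursion bound $\delta_j \leq \sqrt{2\ell_j(\ell_j - c_j)}$ and Cauchy--Schwarz against the polygonal length deficit $L - \sum_j c_j$. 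I checked the pieces: the three-segment set is indeed a barrier for the rectangle (any line meeting the rectangle either exits through a short side or crosses the central axis, the degenerate tangent case passing through corner points of the short sides); the excursion inequality holds in both the ``interior projection'' and ``behind an endpoint'' cases; and the constant $72$ is right. Your approach is more quantitative and keeps $B_\varepsilon$ in a small neighbourhood of $B$ (which the paper's magnification does not), at the cost of more computation; the paper's is shorter but leans on a soft compactness step (``sufficiently fine''). One minor point to tidy: with countably many output segments, rewriting $\bigcup_i \sigma_i$ as curves pairwise disjoint except at endpoints is a measure-theoretic formality (collinear overlaps and accumulating crossing points), but it only decreases total length, and the paper glosses over the same issue.
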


\begin{proof}
Since the proof in \cite{DJP10} has a gap, we provide another proof. 
We will show that for any $\varepsilon > 0$ and any curve $\beta$, 
there is a straight barrier~$\beta''$ of $\beta$ 
of length $\leq (1 + \varepsilon) \lvert \beta \rvert$. 
We can then apply this construction to each curve comprising $B$
to obtain the claimed straight barrier $B _\varepsilon$ of $U$.

If $\beta$ is already a line segment, we are done. 
Otherwise, 
the convex hull~$H$ of~$\beta$ has an interior point. 
Let $\beta'$ be the curve obtained by magnifying $\beta$ 
by $1 + \varepsilon$ about this point. 
Since the convex hull of $\beta'$ 
contains the compact set~$H$ in its interior, 
so does the convex hull of 
a sufficiently fine polygonal approximation $\beta''$ of $\beta'$. 
This implies that $\beta''$ is a barrier of $\beta$. 
\end{proof}

By Lemma~\ref{lemma: segment barriers}, 
we may focus attention on straight barriers: 
$U$ has a rectifiable barrier of length $< l$
if and only if it has a straight barrier of length $< l$. 

As mentioned in the introduction (Lemma~\ref{lemma: Jones lower bound}), 
it has been known that 
any barrier of a convex set must be 
at least half the perimeter. 
We include a short proof of this bound here, 
for completeness and further reference. 
See \cite{cccg08} for another elegant proof. 

\begin{proof}[Proof of Lemma~\ref{lemma: Jones lower bound}]
By Lemma~\ref{lemma: segment barriers}, 
we may assume that $B$ consists of line segments. 
We have
\begin{equation}
\label{equation: width at each angle}
 \lvert 
  U (\alpha)
 \rvert
\leq
 \lvert 
  B (\alpha)
 \rvert
\leq 
 \sum _b 
  \lvert
    b (\alpha)
  \rvert
=
 \sum _b
   \lvert b \rvert 
  \cdot 
   \lvert \cos (\alpha - \theta _b) \rvert 
\end{equation}
for each $\alpha \in [0, 2 \pi)$, 
where the sum is taken over all line segments $b$ that comprise $B$
without overlaps, 
and $\theta _b$ is the angle of $b$. 
Integrating over $[0, 2 \pi)$, we obtain
\begin{equation}
\label{eq: width function integrated}
  \int _{\alpha = 0} ^{2 \pi}
   \lvert
     U (\alpha)
   \rvert
  \, \dee \alpha 
 \leq 
  \sum _{b} 
   \biggl( 
     \lvert b \rvert
    \cdot
     \int _{\alpha = 0} ^{2 \pi}
      \lvert \cos (\alpha - \theta _b) \rvert
     \, \dee \alpha
   \biggr)
 = 
  4 \sum _{b} \lvert b \rvert
 = 
  4 \lvert B \rvert. 
\end{equation}
When $U$ is a convex set, 
the left-hand side equals twice the perimeter. 
\end{proof}

\section{Preliminary lemmas}
\label{section: lemmas}

Note that Theorems \ref{theorem: square} and \ref{theorem: general convex} 
do not merely state
the non-existence of a straight barrier~$B$ of length exactly 
half the perimeter of $U$. 
Such a claim can be proved easily as follows: 
If $B$ is such a barrier, 
the inequality
\eqref{eq: width function integrated}
must hold with equality, 
and so must \eqref{equation: width at each angle}
for each $\alpha$. 
Thus, the second inequality in \eqref{equation: width at each angle} 
must hold with equality, 
which means that $B$ never overlaps with itself
when projected onto the line with angle $\alpha$. 
Since this must be the case for all $\alpha$, 
the entire $B$ must lie on a line, 
which is clearly impossible. 

The theorems claim more strongly 
that a barrier must be longer 
by an absolute constant. 
The following lemma says that in order to obtain such a bound, 
we should find a part $B' \subseteq B$ of the barrier
whose contribution to covering $U$
is less than the optimal by at least a fixed positive constant. 

\begin{lemma}
\label{lemma: waste causes loss}
Let $B$ be a barrier of a convex polygon $U$ of perimeter $2 p$. 
Then 
$\lvert B \rvert \geq p + \delta$ if 
there is a subset $B' \subseteq B$ with
\begin{equation} 
\label{equation: barrier covering nothing}
 \int _{\alpha = 0} ^{2 \pi}
  \lvert B' (\alpha) \cap U (\alpha) \rvert 
 \, \dee \alpha
\leq 
 4 \lvert B' \rvert - 4 \delta. 
\end{equation}
\end{lemma}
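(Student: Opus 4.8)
The plan is to run Jones' averaging argument from the proof of Lemma~\ref{lemma: Jones lower bound}, but to bookkeep the contribution of $B'$ separately from that of the rest of the barrier, using the hypothesis~\eqref{equation: barrier covering nothing} to extract the extra~$\delta$. First I would invoke Lemma~\ref{lemma: segment barriers} to assume $B$ is straight, say $B = \bigcup_b b$ with the segments pairwise non-overlapping; write $B = B' \cup B''$ where $B'' = B \setminus B'$, so that $\lvert B\rvert = \lvert B'\rvert + \lvert B''\rvert$ (up to the arbitrarily small loss from Lemma~\ref{lemma: segment barriers}, which I will absorb at the end). For each angle $\alpha$, since $B$ is a barrier we have $U(\alpha) \subseteq B(\alpha) \subseteq B'(\alpha) \cup B''(\alpha)$, hence
\begin{equation}
\lvert U(\alpha)\rvert \;\le\; \lvert B'(\alpha)\cap U(\alpha)\rvert \;+\; \lvert B''(\alpha)\rvert,
\end{equation}
because every point of $U(\alpha)$ lies in $B'(\alpha)$ or in $B''(\alpha)$, and the part lying in $B'(\alpha)$ also lies in $B'(\alpha)\cap U(\alpha)$.

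Next I would integrate this inequality over $\alpha \in [0,2\pi)$. The left-hand side integrates to twice the perimeter of $U$, i.e.\ $4p$, exactly as in~\eqref{eq: width function integrated}. The second term on the right integrates to at most $\sum_{b \subseteq B''} \lvert b\rvert \int_0^{2\pi} \lvert\cos(\alpha-\theta_b)\rvert\,\dee\alpha = 4\lvert B''\rvert$, again exactly as in the proof of Lemma~\ref{lemma: Jones lower bound} (here I use that the segments of $B''$ are among the non-overlapping segments of $B$). The first term is bounded by $4\lvert B'\rvert - 4\delta$ directly by hypothesis~\eqref{equation: barrier covering nothing}. Combining,
\begin{equation}
4p \;\le\; \bigl(4\lvert B'\rvert - 4\delta\bigr) \;+\; 4\lvert B''\rvert \;=\; 4\lvert B\rvert - 4\delta,
\end{equation}
so $\lvert B\rvert \ge p + \delta$, as claimed.

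The one genuine wrinkle is the interaction with Lemma~\ref{lemma: segment barriers}: that lemma only gives a straight barrier $B_\varepsilon$ with $\lvert B_\varepsilon\rvert \le (1+\varepsilon)\lvert B\rvert$, and a priori the hypothesis~\eqref{equation: barrier covering nothing} is stated for the original $B'$, not for a corresponding piece of $B_\varepsilon$. The clean way around this is to observe that the argument above never actually needs $B$ to be straight for $B'$: the step $\lvert U(\alpha)\rvert \le \lvert B'(\alpha)\cap U(\alpha)\rvert + \lvert B''(\alpha)\rvert$ holds for arbitrary rectifiable sets, and it is only the bound $\int_0^{2\pi}\lvert B''(\alpha)\rvert\,\dee\alpha \le 4\lvert B''\rvert$ that uses straightness. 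So I would apply Lemma~\ref{lemma: segment barriers} only to $B''$ — more precisely, cover $B''$ by a straight barrier $B''_\varepsilon$ of $B''$ with $\lvert B''_\varepsilon\rvert \le (1+\varepsilon)\lvert B''\rvert$, note $B' \cup B''_\varepsilon$ is still a barrier of $U$, run the integration with $B''_\varepsilon$ in place of $B''$, obtain $4p \le 4\lvert B'\rvert - 4\delta + 4(1+\varepsilon)\lvert B''\rvert$, and let $\varepsilon \to 0$. This yields $\lvert B'\rvert + \lvert B''\rvert \ge p + \delta$, i.e.\ $\lvert B\rvert \ge p+\delta$. I expect this bookkeeping to be the only delicate point; the rest is a direct rerun of Jones' computation with the covered mass split into two accounts.
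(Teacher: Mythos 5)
Your proof is correct and takes essentially the same route as the paper's: the same pointwise bound $\lvert U(\alpha)\rvert \leq \lvert (B\setminus B')(\alpha)\rvert + \lvert B'(\alpha)\cap U(\alpha)\rvert$, integrated over $[0,2\pi)$ and combined with the hypothesis~\eqref{equation: barrier covering nothing}. The only difference is cosmetic: the paper applies the Jones-type estimate $\int_0^{2\pi}\lvert (B\setminus B')(\alpha)\rvert\,\dee\alpha \leq 4\lvert B\setminus B'\rvert$ directly, without the detour through Lemma~\ref{lemma: segment barriers} that you add at the end, which is careful but not needed.
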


\begin{proof}
For each $\alpha \in [0, 2 \pi)$, 
we have $U (\alpha) \subseteq B (\alpha)$, 
and thus 
\begin{align}
  \lvert U (\alpha) \rvert 
&
=
  \lvert B (\alpha) \cap U (\alpha) \rvert 
\leq
  \lvert (B \setminus B') (\alpha) \cap U (\alpha) \rvert 
 +
  \lvert B' (\alpha) \cap U (\alpha) \rvert 
\notag
\\
&
\leq
  \lvert (B \setminus B') (\alpha) \rvert
 +
  \lvert B' (\alpha) \cap U (\alpha) \rvert. 
\end{align}
Integrating over $\alpha \in [0, 2 \pi)$
and using the assumption~\eqref{equation: barrier covering nothing}, 
we get $
 4 p 
\leq
 4 \lvert B \setminus B' \rvert + (4 \lvert B' \rvert - 4 \delta)
=
 4 \lvert B \rvert - 4 \delta$. 
\end{proof}

There are several ways in which 
such a ``waste'' can occur, 
and we make use of two of them
(Figure~\ref{figure: waste}). 
\begin{figure}[t]
\begin{center}
\includegraphics[scale=0.95]{./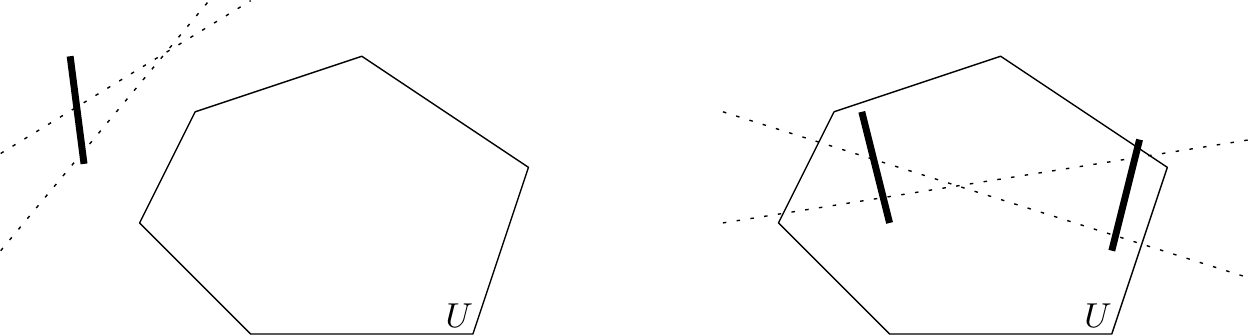}
\caption{Two wasteful situations.
In the left figure, a barrier segment (thick) lies far outside the object~$U$, which leads to significant waste because this segment covers in vain some lines (dotted) that do not pass through $U$; 
this is discussed in Lemma~\ref{lemma: far outside}. 
In the right figure, there are two parts of the barrier (thick) that face each other, which also results in significant waste because they cover some lines (dotted) doubly; 
this is roughly the situation discussed in 
Lemma~\ref{lemma: segment groups}.%
}
\label{figure: waste}
\end{center}
\end{figure}
The first one is when there is a significant part of the barrier 
that lies far outside $U$, 
as described in the following lemma: 

\begin{lemma}
\label{lemma: far outside}
Let $b$ be a line segment that lies outside a convex region $U$. 
Suppose that the set $
A := \{\, \alpha \in [0, 2 \pi) : U (\alpha) \cap b (\alpha) \neq \emptyset \,\}
$ (of angles of all lines through $U$ and $b$)
has measure $\leq 2 \pi - 4 \varepsilon$. 
Then 
\begin{equation}
  \int _{\alpha = 0} ^{2 \pi}
   \lvert b (\alpha) \cap U (\alpha) \rvert 
  \, \dee \alpha
\leq
  4 \lvert b \rvert \cos \varepsilon. 
\end{equation}
\end{lemma}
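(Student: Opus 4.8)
The plan is to bound the left-hand side directly, using the fact that for each angle~$\alpha$, the projection $b(\alpha)$ of a segment of length $\lvert b\rvert$ has length $\lvert b\rvert\,\lvert\cos(\alpha-\theta_b)\rvert$, where $\theta_b$ is the direction of~$b$. Since $b(\alpha)\cap U(\alpha)\subseteq b(\alpha)$, we always have $\lvert b(\alpha)\cap U(\alpha)\rvert\leq \lvert b\rvert\,\lvert\cos(\alpha-\theta_b)\rvert$; but this contribution is nonzero only when $\alpha\in A$. So
\begin{equation}
 \int_{\alpha=0}^{2\pi}\lvert b(\alpha)\cap U(\alpha)\rvert\,\dee\alpha
 \;\leq\;
 \lvert b\rvert \int_{A}\lvert\cos(\alpha-\theta_b)\rvert\,\dee\alpha.
\end{equation}
The task reduces to showing that $\int_{A}\lvert\cos(\alpha-\theta_b)\rvert\,\dee\alpha\leq 4\cos\varepsilon$ for every measurable $A\subseteq[0,2\pi)$ with $\lvert A\rvert\leq 2\pi-4\varepsilon$.

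To see this, note that $\int_{0}^{2\pi}\lvert\cos(\alpha-\theta_b)\rvert\,\dee\alpha=4$, so removing a set of measure $4\varepsilon$ from the domain decreases the integral by $\int_{[0,2\pi)\setminus A}\lvert\cos(\alpha-\theta_b)\rvert\,\dee\alpha$, and we want to show this removed amount is at least $4-4\cos\varepsilon=4(1-\cos\varepsilon)$. Equivalently: among all sets of measure $4\varepsilon$, the one minimizing $\int\lvert\cos(\alpha-\theta_b)\rvert$ is where $\lvert\cos\rvert$ is smallest — namely four little intervals of length~$\varepsilon$ centred at the four zeros of $\cos(\alpha-\theta_b)$ (at $\theta_b\pm\pi/2$, each zero contributing symmetric pieces of total length $\varepsilon$, i.e.\ $\varepsilon/2$ on each side, times four zeros… more precisely two intervals of length $\varepsilon$ straddling $\theta_b\pm\pi/2$). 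A direct computation of $\int$ over such an interval, using $\int_{-\varepsilon/2}^{\varepsilon/2}\sin\lvert t\rvert\,\dee t$ type integrals, gives exactly $1-\cos(\varepsilon/2)$-flavoured terms that sum to $4(1-\cos\varepsilon)$ by the half-angle identity; equivalently one checks that the complementary set of measure $2\pi-4\varepsilon$ positioned to \emph{keep} the four tallest bumps of $\lvert\cos\rvert$ yields integral exactly $4\cos\varepsilon$. The extremality is a standard "bathtub principle" (rearrangement) argument: $\int_A f$ over sets $A$ of fixed measure is maximized by taking $A$ to be a superlevel set of $f$, and here the superlevel sets of $\lvert\cos(\alpha-\theta_b)\rvert$ of measure $2\pi-4\varepsilon$ are exactly the four symmetric intervals around the peaks $\theta_b$ and $\theta_b+\pi$, on which the integral evaluates to $4\cos\varepsilon$.

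The main obstacle, such as it is, is making the extremal computation clean: one wants to avoid a messy case analysis by invoking the rearrangement/bathtub principle to reduce to the single canonical configuration, and then the remaining integral $\int$ of $\lvert\cos\rvert$ over four symmetric peak-intervals of total length $2\pi-4\varepsilon$ must be evaluated and shown to equal $4\cos\varepsilon$ — this is a short trigonometric identity (each of the four peaks contributes $\int_{-\phi}^{\phi}\cos t\,\dee t = 2\sin\phi$ with $\phi=\pi/2-\varepsilon$, giving $4\sin(\pi/2-\varepsilon)=4\cos\varepsilon$). Everything else — the reduction to the projection integral and the pointwise bound $\lvert b(\alpha)\cap U(\alpha)\rvert\leq\lvert b\rvert\lvert\cos(\alpha-\theta_b)\rvert$ — is immediate from the definitions in Section~\ref{section: Jones lower bound}. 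I would also remark that the hypothesis that $b$ lies \emph{outside}~$U$ is what guarantees $A$ is a genuine constraint (otherwise $A$ could be all of $[0,2\pi)$ and the bound would be vacuous with $\varepsilon=0$), though strictly the inequality as stated holds for any $b$ once the measure bound on $A$ is assumed.
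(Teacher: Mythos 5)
Your proof is correct and follows essentially the same route as the paper: bound the integrand by $\lvert b(\alpha)\rvert$ on $A$ and by $0$ off $A$, then maximize $\int_A\lvert\cos(\alpha-\theta_b)\rvert\,\dee\alpha$ over sets of measure $2\pi-4\varepsilon$ via the bathtub/rearrangement principle, the superlevel set (two intervals of length $\pi-2\varepsilon$ centred at the peaks $\theta_b$ and $\theta_b+\pi$) giving exactly $4\cos\varepsilon$. A minor quibble: there are two peaks of $\lvert\cos(\alpha-\theta_b)\rvert$ on $[0,2\pi)$, not four, each contributing $2\sin(\tfrac\pi2-\varepsilon)=2\cos\varepsilon$; your identification of the extremal $A$ as the set hugging the peaks is the right one (the paper's own parenthetical description of the equality case appears to be shifted by $\pi/2$, placing the intervals about the zeros instead).
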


\begin{proof}
We have 
\begin{equation}
  \int _{\alpha = 0} ^{2 \pi}
   \lvert b (\alpha) \cap U (\alpha) \rvert 
  \, \dee \alpha 
 \leq
  \int _{\alpha \in A}
   \lvert b (\alpha) \rvert 
  \, \dee \alpha 
 =
   \lvert b \rvert
  \cdot
   \int _{\alpha \in A}
    \lvert \cos (\alpha - \theta _b) \rvert 
   \, \dee \alpha 
 \leq
  4 \lvert b \rvert \cos \varepsilon, 
\end{equation}
where the equality in the last inequality is attained when $
 A 
=
  [\varepsilon + \theta _b, \pi - \varepsilon + \theta _b]
 \cup
  [\pi + \varepsilon + \theta _b, 2 \pi - \varepsilon + \theta _b]
$. 
\end{proof}

The second situation where we have 
a significant waste 
required in Lemma~\ref{lemma: waste causes loss} 
is when there are two sets of barrier segments that roughly face each other: 

\begin{lemma}
\label{lemma: segment groups}
Let $\lambda \in (0, \frac \pi 2)$, $\kappa \in (0, \lambda)$ and $l$, $D > 0$. 
Let $B ^-$ and $B ^+$ be unions of $n$ line segments of length~$l$
(Figure~\ref{figure: segment_groups})
such that 
\begin{enumerate}
\item
every segment of $B ^- \cup B ^+$ makes 
angle $> \lambda$ with the horizontal axis; 
\item
$B ^- \cup B ^+$ lies entirely 
in the disk of diameter~$D$ centred at the origin; 
\item
$B ^-$ and $B ^+$ are separated by 
bands of angle $\kappa$ and width $W := n l \sin (\lambda - \kappa)$ centred at the origin, 
as depicted in 
Figure~\ref{figure: segment_groups}---that is, 
each point $(x, y) \in B ^\pm$ satisfies
$
  \pm (x \sin \kappa + y \cos \kappa) \geq W / 2
$ and $
  \pm (x \sin \kappa - y \cos \kappa) \geq W / 2
$ (where $\pm$ should be read consistently as $+$ and $-$). 
\end{enumerate}
\begin{figure}[t]
\begin{center}
\includegraphics{./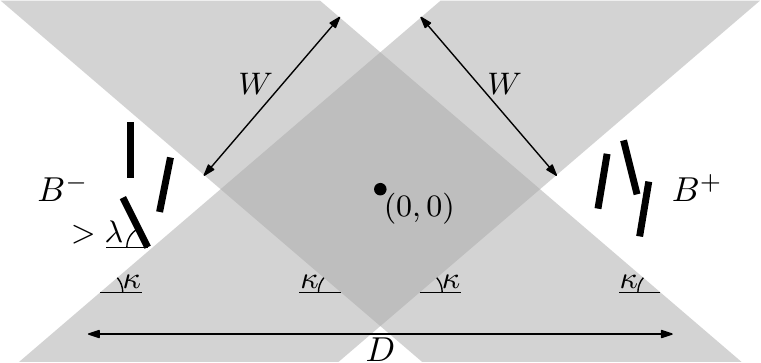}
\caption{Sets $B ^-$ and $B ^+$
(Lemma~\ref{lemma: segment groups}).}
\label{figure: segment_groups}
\end{center}
\end{figure}
Then 
\begin{equation}
\label{equation: parts facing each other}
  \int _{\alpha = 0} ^{2 \pi}
   \lvert (B ^- \cup B ^+) (\alpha) \rvert
  \, \dee \alpha
\leq
  8 n l
 -
  \frac{2 W ^2}{D}. 
\end{equation}
\end{lemma}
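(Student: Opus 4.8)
The plan is to establish \eqref{equation: parts facing each other} in the equivalent form
\begin{equation*}
  \int_0^{2\pi}\Bigl(\mathrm{def}^+(\alpha)+\mathrm{def}^-(\alpha)+\bigl\lvert B^+(\alpha)\cap B^-(\alpha)\bigr\rvert\Bigr)\,\dee\alpha \;\ge\; \frac{2W^2}{D},
\end{equation*}
where $\mathrm{def}^\pm(\alpha):=\sum_{b\in B^\pm}\lvert b(\alpha)\rvert-\lvert B^\pm(\alpha)\rvert\ge 0$ is the self-overlap of the $n$ projected segments of $B^\pm$ in direction~$\alpha$. This is equivalent to \eqref{equation: parts facing each other}, since $\int_0^{2\pi}\lvert b(\alpha)\rvert\,\dee\alpha=4l$ for each of the $2n$ segments, so the left side above is exactly $8nl-\int_0^{2\pi}\lvert(B^-\cup B^+)(\alpha)\rvert\,\dee\alpha$ (in line coordinates, half of it is the measure of lines meeting at least two of the segments, counted with multiplicity minus one). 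Thus it suffices to locate, among the self-overlaps of $B^-$ and $B^+$ and the cross-overlap between them, a total of at least $2W^2/D$.

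Next I would record two facts, taking as reference the direction $\psi_0:=\tfrac\pi2-\kappa$ perpendicular to one of the two separating bands. (i)~Since every segment makes angle in $(\lambda,\pi-\lambda)$ with the horizontal and $\kappa<\lambda$, projecting a length-$l$ segment onto direction~$\psi_0$ leaves a segment of length $\ge l\sin(\lambda-\kappa)$; hence $\sum_{b\in B^\pm}\lvert b(\psi_0)\rvert\ge nl\sin(\lambda-\kappa)=W$, and more generally $\sum_{b\in B^\pm}\lvert b(\psi_0+\tau)\rvert\ge nl\sin(\lambda-\kappa-\lvert\tau\rvert)$ for small~$\tau$. (ii)~Hypothesis~3 says precisely that the projection of $B^+$ onto $\psi_0$ lies in $[W/2,\infty)$ and that of $B^-$ in $(-\infty,-W/2]$; together with hypothesis~2, a short computation shows that for $\alpha=\psi_0+\tau$ the projections $B^+(\alpha)$ and $B^-(\alpha)$ are separated by a gap of width $\ge W\cos\tau-D\sin\lvert\tau\rvert$, which is positive whenever $\tan\lvert\tau\rvert<W/D$. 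In particular $\lvert B^+(\alpha)\cap B^-(\alpha)\rvert=0$ for such $\alpha$, and $\lvert B^+(\alpha)\rvert+\lvert B^-(\alpha)\rvert=\lvert(B^-\cup B^+)(\alpha)\rvert\le D-\bigl(W\cos\tau-D\sin\lvert\tau\rvert\bigr)$.

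The argument then splits according to how ``bunched'' the $n$ segments of $B^\pm$ are. Set $v^\pm:=\lvert B^\pm(\psi_0)\rvert\le(D-W)/2$. If $v^+$ (or $v^-$) is small---at most a threshold to be chosen---then $\mathrm{def}^+(\psi_0)=\sum_{b\in B^+}\lvert b(\psi_0)\rvert-v^+\ge W-v^+$ is large; since each $\lvert b(\alpha)\rvert$ varies only like a cosine and the projections depend Lipschitz-continuously on~$\alpha$, this lower bound on $\mathrm{def}^+$ persists, with a controlled loss, over an interval of directions of length of order $W/D$ about $\psi_0$, so that $\int\mathrm{def}^+$ alone already contributes $\gtrsim W^2/D$. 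If instead both $v^+$ and $v^-$ exceed the threshold, then the convex hulls $C^\pm:=\mathrm{conv}(B^\pm)$ are ``long'' in direction $\psi_0$ and sit at horizontal distance $\ge W/\sin\kappa$ apart inside the disk; a Cauchy/Crofton-type count of the chords common to $C^+$ and $C^-$ (which, being nearly vertical strips near opposite ends of the disk, overlap in projection over a whole range of near-vertical directions) gives $\int_0^{2\pi}\lvert C^+(\alpha)\cap C^-(\alpha)\rvert\,\dee\alpha\gtrsim W^2/D$, and the gap between this and $\int\lvert B^+(\alpha)\cap B^-(\alpha)\rvert\,\dee\alpha$ is paid for by $\int(\mathrm{def}^++\mathrm{def}^-)$ via
\begin{equation*}
  \bigl\lvert B^+(\alpha)\cap B^-(\alpha)\bigr\rvert \;\ge\; \bigl\lvert C^+(\alpha)\cap C^-(\alpha)\bigr\rvert - \bigl(\lvert C^+(\alpha)\rvert-\lvert B^+(\alpha)\rvert\bigr) - \bigl(\lvert C^-(\alpha)\rvert-\lvert B^-(\alpha)\rvert\bigr).
\end{equation*}
Calibrating the threshold at order~$W$ makes both branches deliver the required $2W^2/D$.

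I expect the main obstacle to be the cross-overlap estimate in the second branch: $B^\pm$ need not be convex and their $n$ segments may point in many directions, so the chord count has to be run on the hulls $C^\pm$ and then transferred back, and keeping the constant large enough to reach the factor~$2$ (rather than merely some positive constant) forces one to track the geometry carefully; this is also why $\psi_0$---the band-normal direction---is the right reference, as it is there that hypothesis~3 pins down the separation of $B^+$ and $B^-$ while fact~(i) pins down their projected masses. The Lipschitz-in-$\alpha$ propagation of the self-overlap bound in the first branch and the explicit gap width $W\cos\tau-D\sin\lvert\tau\rvert$ are the two ingredients that let the ``bunched'' and ``spread'' estimates meet in the middle.
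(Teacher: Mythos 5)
Your reformulation of \eqref{equation: parts facing each other} as the bound $\int_0^{2\pi}(\mathrm{def}^+ + \mathrm{def}^- + \lvert B^+(\alpha)\cap B^-(\alpha)\rvert)\,\dee\alpha\geq 2W^2/D$ is correct and matches the paper's starting point, but both branches of your dichotomy have genuine gaps. In the ``bunched'' branch, the persistence claim fails: $\lvert B^+(\alpha)\rvert$ is the measure of a union of $n$ intervals each of whose endpoints moves at speed at most $D/2$, so it is only $nD$-Lipschitz in $\alpha$, not $D$-Lipschitz. If, say, the $n$ projected intervals at $\psi_0$ form a few coincident clusters whose members fan out at relative speeds spread over $[-D/2,D/2]$, the deficit $W-v^+$ decays to zero within an angle of order $W/(nD)$, and $\int\mathrm{def}^+\,\dee\alpha$ can be as small as order $W^2/(nD)$ --- short of the target by a factor of $n$. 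In the ``spread'' branch, the transfer from the hulls back to $B^\pm$ is broken: the correction terms $\lvert C^\pm(\alpha)\rvert-\lvert B^\pm(\alpha)\rvert$ measure the \emph{gaps} between the projected segments inside the hull, and gaps are not overlaps, so they are not controlled by $\mathrm{def}^\pm$. If the $n$ segments of $B^+$ project to $n$ pairwise disjoint intervals spread over a range of order $D$ (so $v^+=W$ and $\mathrm{def}^+\equiv 0$), then $\lvert C^+(\alpha)\rvert-\lvert B^+(\alpha)\rvert$ is of order $D\gg W$ and your displayed inequality gives a negative, hence vacuous, lower bound on $\lvert B^+(\alpha)\cap B^-(\alpha)\rvert$; nothing is available to pay for this loss.

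The crux your argument does not engage with is the double counting of overlaps. The paper works pair by pair: for each of the $n^2$ pairs $(b,b')\in\mathcal B^-\times\mathcal B^+$, hypothesis~3 forces the shadows $R_b$, $R_{b'}$ (restricted to $\alpha\in[\frac\pi2-\kappa,\frac\pi2+\kappa]$ and trimmed to width exactly $W/n$) to cross, and each crossing has area at least $(W/n)^2/D$ by the Lipschitz bound, giving $W^2/D$ per half-circle in total. The difficulty is that these $n^2$ crossing regions may coincide with one another and with the self-overlaps, so their areas cannot simply be added when lower-bounding $\sum_b\lvert R_b\rvert-\lvert\bigcup_b R_b\rvert$. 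Lemma~\ref{lemma: crossing} resolves this by a rearrangement --- sort the bands within each group so they never overtake one another, then push them upwards so they become pairwise disjoint --- which only enlarges the union and reduces to the case where the $n^2$ crossings are genuinely disjoint; the extra width-$W$ separation in hypothesis~3 is exactly what survives the pushing. Any correct proof needs some substitute for this step, and neither of your branches provides one.
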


Note that $8 n l = 4 \lvert B ^- \cup B ^+ \rvert$, 
so \eqref{equation: parts facing each other} is of the form 
\eqref{equation: barrier covering nothing} in 
Lemma~\ref{equation: barrier covering nothing}. 
The proof of Lemma~\ref{lemma: segment groups}
requires a more involved argument, 
which will be given in Section~\ref{section: proof of lemmas}. 
Before that, 
we prove Theorems \ref{theorem: square} and \ref{theorem: general convex}
using Lemmas \ref{lemma: far outside} and \ref{lemma: segment groups}. 

\section{Proof of Theorem~\ref{theorem: square}}
\label{section: main}

We prove Theorem~\ref{theorem: square}
using Lemmas \ref{lemma: waste causes loss}, 
\ref{lemma: far outside} and \ref{lemma: segment groups}. 
The proof roughly goes as follows. 
Consider a barrier
whose length is very close to $2$. 
\begin{enumerate}
\item 
There cannot be too much of the barrier far outside $\UnitSq$, 
because that would be too wasteful
by Lemma~\ref{lemma: far outside}. 
\item 
This implies that there must be a significant part of the barrier
near each vertex of $\UnitSq$, 
because this is the only place to put barrier segments 
that block those lines intersecting $\UnitSq$ only near this vertex. 
\item 
Among the parts of the barrier that lie near the four vertices, 
there are parts that face each other 
and thus lead to waste by Lemma~\ref{lemma: segment groups}. 
\end{enumerate}

\begin{proof}[Proof of Theorem~\ref{theorem: square}]
Let $\UnitSq$ be the unit square, 
which we assume to be 
closed, 
axis-aligned, 
and centred at the origin. 
Let $B$ be its barrier. 
By Lemma~\ref{lemma: segment barriers}, 
we may assume that $B$ consists of line segments. 
Let $\octagon$ be the octagon (Figure~\ref{figure: octants})
obtained by attaching 
to each edge of $\UnitSq$
an isosceles triangle of height $\frac{29}{590}$ 
(and thus whose identical angles are $\arctan \frac{29}{295}$). 
\begin{figure}[t]
\begin{center}
\includegraphics{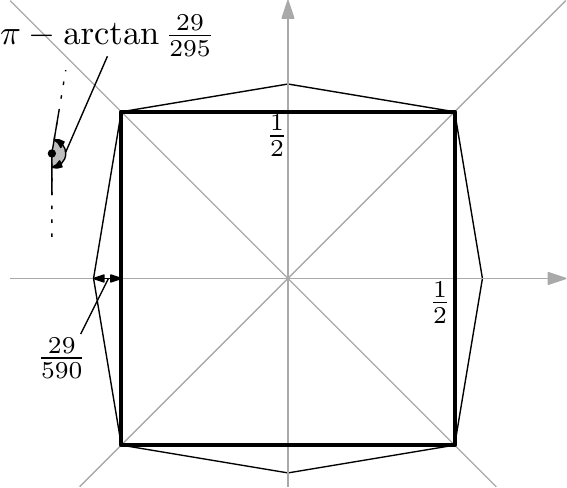}
\caption{Viewed from any point outside the octagon $\protect\octagon$, 
the square $\UnitSq$ lies inside an angle that is smaller than $\pi$ by 
the constant $\arctan \frac{29}{295}$.}
\label{figure: octants}
\end{center}
\end{figure}
Let $B _{\mathrm{out}} = B \setminus \octagon$. 

If $\lvert B _{\mathrm{out}} \rvert > \frac{1}{60}$, 
there is a subset $B' _{\mathrm{out}} \subseteq B _{\mathrm{out}}$
of length $\lvert B' _{\mathrm{out}} \rvert = \frac{1}{60}$
which is a disjoint union of 
finitely many line segments $b$, 
each lying entirely in one of the eight 
regions delimited by the two axes and the two bisectors of the axes. 
Observe that, viewed from each point on $b$, 
the square $\UnitSq$ lies entirely in
an angle measuring $\pi - \arctan \frac{29}{295}$
(Figure~\ref{figure: octants}). 
This allows us to apply Lemma~\ref{lemma: far outside} and obtain
\begin{equation}
  \int _{\alpha = 0} ^{2 \pi}
   \lvert b (\alpha) \cap \UnitSq (\alpha) \rvert 
  \, \dee \alpha
\leq
  4 \lvert b \vert \cos \biggl( \frac 1 2 \arctan \frac{29}{295} \biggr)
<
  4 \lvert b \vert 
 - 
  0.0048 \lvert b \rvert. 
\end{equation}
Summing up for all $b$ 
(and using the triangle inequality), 
we have 
\begin{equation}
  \int _{\alpha = 0} ^{2 \pi}
   \lvert B' _{\mathrm{out}} (\alpha) \cap \UnitSq (\alpha) \rvert 
  \, \dee \alpha
<
  4 \lvert B' _{\mathrm{out}} \vert 
 - 
  0.0048 \lvert B' _{\mathrm{out}} \vert 
=
  4 \lvert B' _{\mathrm{out}} \vert 
 - 
  0.00008, 
\end{equation}
which yields $\lvert B \rvert \geq 2.00002$ 
by Lemma~\ref{lemma: waste causes loss}. 
From now on, we can and will assume that $
\lvert B _{\mathrm{out}} \rvert \leq \frac{1}{60}
$. 

The intersection of $B$ and 
the strip $I _0 := \{\, (x, y) \in \Rset ^2 : \frac 7 8 \leq x + y \leq 1 \,\}$
has length at least $\sqrt 2 / 16$, 
because $
 B (\frac \pi 4) 
\supseteq
 \UnitSq (\frac \pi 4)
=
 [-\sqrt 2 / 2, \sqrt 2 / 2]
\supseteq 
 [\frac 7 8 \sqrt 2 / 2, \sqrt 2 / 2]
$. 
Let $R _0 := I _0 \cap \octagon$ (Figure~\ref{figure: r0})
and $B _0 := B \cap R _0$. 
Then we have
\begin{equation}
 \lvert B _0 \rvert 
=
 \lvert (B \cap I _0) \setminus B _{\mathrm{out}} \rvert 
\geq
 \lvert B \cap I _0 \rvert - \lvert B _{\mathrm{out}} \rvert 
\geq
 \frac{\sqrt 2}{16} - \frac{1}{60}
>
 0.07172
=: 
 2 \eta. 
\end{equation}
Likewise, let $R _1$, $R _2$, $R _3$ be the 
upper left, lower left, and lower right
corners of $\octagon$, respectively.
\begin{figure}[t]
\begin{center}
\includegraphics{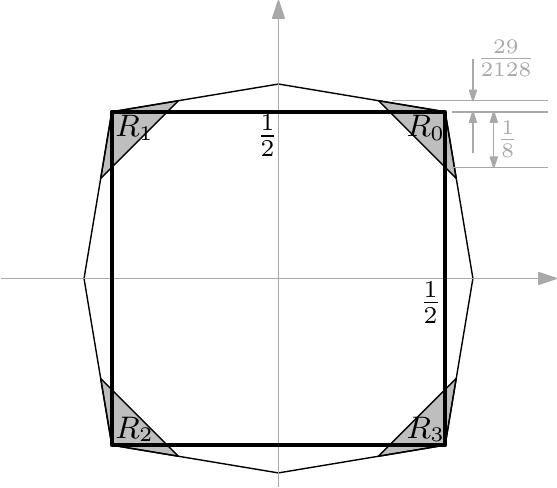}
\caption{The regions $R _0$, $R _1$, $R _2$, $R _3$.}
\label{figure: r0}
\end{center}
\end{figure}
Each of the intersections $B _1$, $B _2$, $B _3$ of $B$ with these regions
has length $> 2 \eta$. 
Observe that 
$R _0 (\frac \pi 2 - 0.1813)$ lies above $R _1 (\frac \pi 2 - 0.1813)$, 
with a gap of size
\begin{equation}
 \frac 7 8 \sin 0.1813 - \biggl( \frac 1 8 + 2 \cdot \frac{29}{2128} \biggr) \cos 0.1813
> 
 0.008; 
\end{equation}
and $R _0 (\frac{3 \pi}{4} - 0.1813)$ lies above $R _2 (\frac{3 \pi}{4} - 0.1813)$, with an even bigger gap. 

For each $i$, we divide $B _i$ into 
three parts $B _{i, i + 1}$, $B _{i, i + 2}$, $B _{i, i + 3}$
(the subscripts are modulo $4$), 
consisting respectively of segments whose angles are in 
$[\frac \pi 2 i - \frac \pi 4, \allowbreak \frac \pi 2 i + \frac \pi 8)$, 
$[\frac \pi 2 i + \frac \pi 8, \frac \pi 2 i + \frac{3 \pi}{8})$ and 
$[\frac \pi 2 i + \frac{3 \pi}{8}, \frac \pi 2 i + \frac{3 \pi}{4})$. 
Thus, 
$B _{i, j}$ consists of segments in $B _i$ that 
``roughly point towards $R _j$.'' 
Since $\lvert B _i \rvert > 2 \eta$, 
we have $\lvert B _i \setminus B _{i, j} \rvert > \eta$
for at least two of the three $j$ for each $i$, 
and thus, for at least eight of the twelve pairs $(i, j)$. 
Hence, there is $(i, j)$ such that $
\lvert B _i \setminus B _{i, j} \rvert > \eta
$ and $
\lvert B _j \setminus B _{j, i} \rvert > \eta
$. 

Let $B ^- \subseteq B _i \setminus B _{i, j}$ 
and $B ^+ \subseteq B _j \setminus B _{j, i}$
be finite unions of line segments of the same length 
such that $\lvert B ^- \rvert = \lvert B ^+ \rvert = \eta$. 
Apply Lemma~\ref{lemma: segment groups} 
to these $B ^-$ and $B ^+$, 
rotated and translated appropriately, 
and the constants 
$\kappa = 0.1813$, 
$\lambda = \frac \pi 8$, 
$D = \sqrt 2$. 
Note that the last assumption of Lemma~\ref{lemma: segment groups} 
is satisfied because $
 W
:=
 \eta \sin (\lambda - \kappa) 
=
 0.03586 \sin (\frac \pi 8 - 0.1813)
=
 0.007524\ldots 
<
 0.008
$. 
This gives
\begin{equation}
  \int _{\alpha = 0} ^{2 \pi}
   \lvert (B ^- \cup B ^+) (\alpha) \rvert
  \, \dee \alpha
\leq
  8 \eta
 -
  \frac{2 W ^2}{D}
<
  8 \eta
 -
  0.00008, 
\end{equation}
whence $\lvert B \rvert \geq 2.00002$ 
by Lemma~\ref{lemma: waste causes loss}. 
\end{proof}

\section{Proof of Theorem~\ref{theorem: general convex}}
\label{section: general convex}

Theorem~\ref{theorem: general convex} is proved by modifying the proof of 
Theorem~\ref{theorem: square} (Section~\ref{section: main}) as follows. 
Let $x _i$ be distinct points ($i = 1$, $2$, $3$, $4$) on the boundary of $U$
at which $U$ is strictly convex, 
i.e., there is a line that intersects $U$ only at $x _i$; 
let $\alpha _i$ be the angle of this line. 
Note that such four points exist unless $U$ is a triangle. 
Let $R _i$ be a sufficiently small closed neighbourhood of $x _i$, 
so that no three of $R _1$, $R _2$, $R _3$, $R _4$ are 
stabbed by a line. 

Instead of the octagon $\octagon$, 
we consider the set $S _\delta \supseteq U$ of points such that 
a random line through this point avoids $U$
with probability less than a positive constant~$\delta$
(Figure~\ref{figure: far_out}). 
\begin{figure}[t]
\begin{center}
\includegraphics{./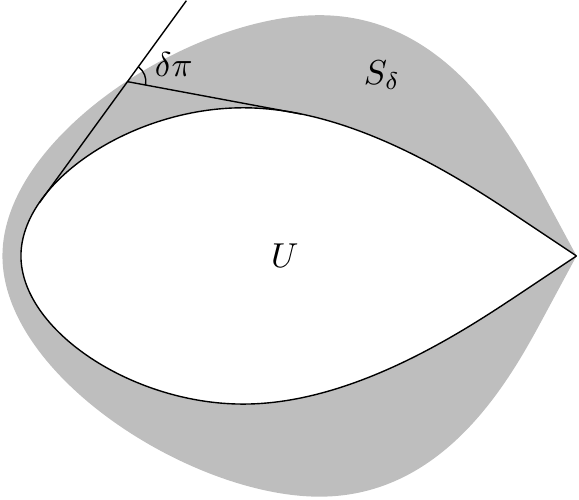}
\caption{$S _\delta$ is the set of points from which $U$ looks big.
Putting too much of the barrier outside $S _\delta$ is wasteful.}
\label{figure: far_out}
\end{center}
\end{figure}
By applying Lemma~\ref{lemma: far outside} in the same way
(with some routine compactness argument), 
we know that $B _{\mathrm{out}} := B \setminus S _\delta$ must be small
(under the assumption of $\lvert B \rvert \leq p + \varepsilon$, 
for an appropriately small $\varepsilon$). 
By taking $\delta$ sufficiently small, 
$S _\delta$ comes so close to $U$ that 
the following happens for each $i = 1$, $2$, $3$, $4$: 
there is a neighbourhood $N \subseteq U$ of $x _i$ in $U$ such that 
every angle-$\alpha _i$ line that intersects $N$ 
intersects $S _\delta$ only in $R _i$. 
This guarantees that the part $B _i := B \cap R _i$ of the barrier 
must have length at least some positive constant
(just to block those angle-$\alpha _i$ lines that hit $N$). 
This allows us to define $B _{i, j}$ in the way similar to 
Theorem~\ref{theorem: square} 
and apply Lemma~\ref{lemma: segment groups} with 
appropriate $\kappa$, $\lambda$, $D$. 

This proves Theorem~\ref{theorem: general convex}. 
To see that the constant~$\varepsilon$ in the statement
must depend on $U$, just consider arbitrarily thin rectangles. 

\section{Proof of Lemma \ref{lemma: segment groups}}
\label{section: proof of lemmas}

It remains to prove Lemma \ref{lemma: segment groups}. 
Let us first interpret 
what it roughly claims. 
By symmetry, 
we can 
halve the interval~$[0, 2 \pi]$
and replace 
\eqref{equation: parts facing each other}
by
\begin{equation}
\label{equation: parts facing each other half}
  4 n l
 -
  \int _{\alpha = 0} ^\pi
   \lvert (B ^- \cup B ^+) (\alpha) \rvert
  \, \dee \alpha
 \geq
  \frac{W ^2}{D}. 
\end{equation}
Let $\mathcal B ^-$ and $\mathcal B ^+$ be the sets of line segments of length~$l$
comprising $B ^-$ and $B ^+$, respectively. 
For each $b \in \mathcal B ^- \cup \mathcal B ^+$, 
consider the region 
\begin{equation}
\label{equation: shadow}
 R _b 
:=
 \{\, (\alpha, v) \in [0, \pi] \times \Rset : v \in b (\alpha) \,\}, 
\end{equation}
whose area is $2 l$. 
Note that the first term 
$4 n l$ 
of \eqref{equation: parts facing each other half}
is the sum of this area for all $b \in \mathcal B ^- \cup \mathcal B ^+$, 
whereas the second term is the area of the union. 
Thus, 
\eqref{equation: parts facing each other half} says that 
the area of the overlap (considering multiplicity) is 
at least $W ^2 / D$. 
Since this term $W ^2 / D$ is proportional to $n ^2$, 
which is the number of pairs $(b, b') \in \mathcal B ^- \times \mathcal B ^+$, 
we should lower-bound (by a constant determined by $\lambda$, $\kappa$, $D$) 
the area of the overlap $R _b \cap R _{b'}$ 
per such pair $(b, b')$. 
This is relatively easy if the overlaps $R _b \cap R _{b'}$ are all disjoint
(using the fact that $R _b$ and $R _{b'}$ must cross roughly in the middle
because of the configuration in Figure~\ref{figure: segment_groups}), 
but it can get tricky otherwise. 

To analyze such a situation, 
we start with the following lemma, 
which makes a similar estimate on the size of potentially complicated overlaps, 
but of simpler objects, namely bands with constant width. 

\begin{lemma}
\label{lemma: crossing}
Let $I \subseteq \Rset$ be an interval and let $W$, $D \geq 0$. 
Let $\mathcal U$ be the set of functions~$f$ 
which take each $\alpha \in I$ to an interval 
$f (\alpha) = [\underline f (\alpha), \overline f (\alpha)]$ of length~$W / n$ 
and are $\frac 1 2 D$-\emph{Lipschitz}, that is, $
 \lvert \underline f (\alpha _0) - \underline f (\alpha _1) \rvert 
\leq
 \frac 1 2 D \cdot \lvert \alpha _0 - \alpha _1 \rvert
$ for each $\alpha _0$, $\alpha _1 \in I$. 
Suppose that $2 n$ functions
$f _1$, \ldots, $f _n$, $g _1$, \ldots, $g _n \in \mathcal U$ satisfy
\begin{align}
\label{equation: much below}
\underline{g _j} (\min I) - \overline{f _i} (\min I) & \geq W, 
&
\underline{f _i} (\max I) - \overline{g _j} (\max I) & \geq W
\end{align}
for each $i$, $j$ 
(i.e., the functions $f _i$ start far below $g _j$ and end up far above). 
Then 
\begin{equation}
\label{equation: overlap}
  \bigl\lvert 
    R _{f _1} \cup \dots \cup R _{f _n}
   \cup
    R _{g _1} \cup \dots \cup R _{g _n}
  \bigr\rvert
\leq
  2 W \lvert I \rvert
 -
  \frac{W ^2}{D}, 
\end{equation}
where $
 R _f
:= 
 \{\, 
  (\alpha, v) \in I \times \Rset 
 :
  v \in f (\alpha) 
 \,\}
$ denotes the graph of $f \in \mathcal U$. 
\end{lemma}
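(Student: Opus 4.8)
The plan is to reduce the two-dimensional area estimate to a one-dimensional argument by slicing $R_{f_1}\cup\dots\cup R_{g_n}$ by vertical lines $\{\alpha\}\times\Rset$. For a fixed $\alpha\in I$, let $m(\alpha)$ denote the Lebesgue measure of the vertical slice, i.e.\ of $\bigcup_i f_i(\alpha)\cup\bigcup_j g_j(\alpha)\subseteq\Rset$. Since each of the $2n$ intervals has length $W/n$, the trivial bound is $m(\alpha)\le 2W$, and integrating this over $I$ gives the first term $2W\lvert I\rvert$ on the right-hand side of \eqref{equation: overlap}. To get the improvement $-W^2/D$, I would show that the ``deficit'' $2W-m(\alpha)$, i.e.\ the amount of overlap at height $\alpha$, integrates to at least $W^2/D$.

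The key observation is the crossing structure forced by \eqref{equation: much below}: at $\alpha=\min I$ every $f_i$-interval lies entirely below every $g_j$-interval (with a gap of at least $W$, hence the union of all $2n$ intervals has measure exactly $2W$ only if the $f_i$'s among themselves, and the $g_j$'s among themselves, are pairwise disjoint—this is the ``best case''), while at $\alpha=\max I$ the roles are reversed. Because each function is $\tfrac12 D$-Lipschitz, the lower endpoint $\underline{f_i}$ can rise by at most $\tfrac12 D\lvert I\rvert$ across $I$, and similarly $\overline{g_j}$ can fall by at most $\tfrac12 D\lvert I\rvert$; combined with the fact that $\underline{f_i}$ overtakes $\overline{g_j}$ somewhere in $I$, this pins down where the crossings happen and forces the slabs $R_{f_i}$ and $R_{g_j}$ to overlap in a region of controlled shape. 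The cleanest way to organize this: for each pair $(i,j)$, consider the set $T_{ij}\subseteq I$ of $\alpha$ with $f_i(\alpha)\cap g_j(\alpha)\ne\emptyset$; on $T_{ij}$ the overlap contributes, and I would argue that summing the contributions and using Fubini recovers $W^2/D$. The honest difficulty is that overlaps can be shared: a single point $(\alpha,v)$ might lie in many $R_{f_i}$ and $R_{g_j}$ simultaneously, so I cannot simply add up pairwise overlap areas—that would overcount.

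To handle the overcounting I would instead argue directly about the slice function $m(\alpha)$. Introduce, for each $\alpha$, the quantity $F(\alpha):=\sup_i\overline{f_i}(\alpha)$ (the top of the highest $f$-interval) and $G(\alpha):=\inf_j\underline{g_j}(\alpha)$ (the bottom of the lowest $g$-interval). Both are $\tfrac12 D$-Lipschitz; by \eqref{equation: much below}, $G(\min I)-F(\min I)\ge W$ and $F(\max I)-G(\max I)\ge W$, so $F-G$ goes from $\le -W$ to $\ge W$. When $F(\alpha)>G(\alpha)$—which happens on a subinterval of total length at least $W/D$, since $F-G$ is $D$-Lipschitz and must travel a distance $2W$—some $f$-interval overlaps some $g$-interval, and one can show the slice measure satisfies $m(\alpha)\le 2W-\min\{F(\alpha)-G(\alpha),\,W/n\}$ at the crudest level; a more careful bookkeeping, tracking how much the ``overlap band'' $[\,G(\alpha),F(\alpha)\,]\cap(\text{everything})$ costs, yields $\int_I\bigl(2W-m(\alpha)\bigr)\,\dee\alpha\ge W^2/D$. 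I expect the main obstacle to be exactly this last estimate: making the per-slice overlap bound tight enough, uniformly over all admissible configurations of the $f_i$ and $g_j$, so that the integral lands at $W^2/D$ rather than something smaller. I would expect the extremal configuration to be the ``rigid'' one in which all $f_i$ coincide and all $g_j$ coincide, each moving at the maximal Lipschitz rate $\tfrac12 D$ in opposite directions and crossing in the middle of $I$; verifying that this is worst-case (e.g.\ by a convexity or exchange argument on the slice profiles) is the crux, and the rest is a routine triangle-area computation giving $W\cdot(W/D)/1$ up to the constant.
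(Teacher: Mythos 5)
Your reformulation is sound as far as it goes: writing $m(\alpha)$ for the measure of the vertical slice, the lemma is exactly equivalent to $\int_I\bigl(2W-m(\alpha)\bigr)\,\dee\alpha\geq W^2/D$, and you correctly identify that the $n^2$ pairwise crossings, each worth about $(W/n)^2/D$, are what must be collected. But the step you flag as the ``crux'' is genuinely missing, and the route you sketch for it cannot deliver the constant. Your envelope functions $F=\sup_i\overline{f_i}$ and $G=\inf_j\underline{g_j}$ collapse each family to a single effective band of width $W/n$, so the per-slice deficit bound $2W-m(\alpha)\geq\min\{F(\alpha)-G(\alpha),\,W/n\}$ integrates to roughly $(W/n)\cdot(W/D)=W^2/(nD)$ --- short of the target by a factor of $n$. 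To reach $W^2/D$ you must make \emph{every} pair $(i,j)$ pay, not just the extremal pair, and nothing in the proposal forces the $f_i$'s (resp.\ $g_j$'s) to be spread out enough among themselves for that. Indeed your guess that the extremal configuration has all $f_i$ coincident points the wrong way: if all $f_i$ coincide, their union is tiny and the inequality is trivially slack; the genuinely extremal (largest-union) case is the opposite one, where each family is internally disjoint.

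The missing idea, which is how the paper proceeds, is a rearrangement reducing the general case to that internally disjoint (``simple'') case: first relabel the $f_i$ at every crossing among themselves so that the intervals $f_1(\alpha),\ldots,f_n(\alpha)$ stay in ascending order for all $\alpha$, then push them upward one at a time, setting $\underline{f'_i}(\alpha)=\max\{\underline{f_i}(\alpha),\overline{f'_{i-1}}(\alpha)\}$, so that the graphs become pairwise disjoint while only growing the union. The modified functions are still $\tfrac12 D$-Lipschitz (at each $\alpha$ each $f'_i$ moves at the speed of some original $f_j$), and since each is raised by at most $W$ (the total width of the family), the endpoint gap of $W$ in \eqref{equation: much below} degrades only to the weaker separation $\underline{g'_j}(\min I)\geq\overline{f'_i}(\min I)$ and $\underline{f'_i}(\max I)\geq\overline{g'_j}(\max I)$, which is all the simple case needs: there the $n^2$ regions $R_{f'_i}\cap R_{g'_j}$ are pairwise disjoint and each has area at least $(W/n)^2/D$, summing to $W^2/D$. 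This explains, incidentally, why the hypothesis demands a gap of size $W$ rather than mere separation --- a role your argument does not assign to it.
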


\begin{proof}
Since $\lvert R _f \rvert = \lvert I \rvert \cdot W / n$
for each $f \in \mathcal U$, 
the bound \eqref{equation: overlap} says that 
the loss of area caused by overlaps
is at least $W ^2 / D$. 

We say that $\{f _1, \ldots, f _n\} \subseteq \mathcal U$ is \emph{simple} if 
$R _{f _i} \cap R _{f _j} = \emptyset$ for all distinct $i$, $j$. 
If both $\{f _1, \ldots, f _n\}$ and $\{g _1, \ldots, g _n\}$ are simple
(Figure~\ref{figure: cross_easy}), 
\begin{figure}[t]
\begin{center}
\includegraphics{./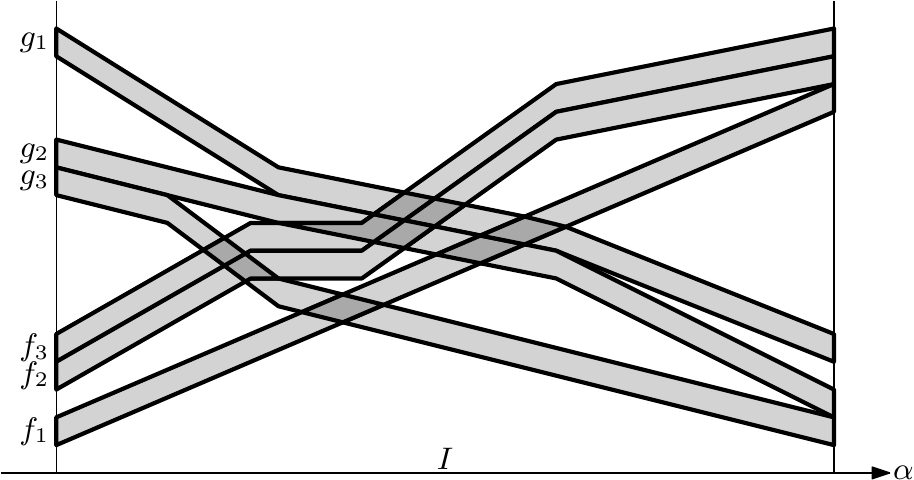}
\caption{Lemma~\ref{lemma: crossing} is easy if $\{f _1, \ldots, f _n\}$ and $\{g _1, \ldots, g _n\}$ are simple. In this case, we only have overlaps of the form $R _{f _i} \cap R _{g _j}$ (nine dark regions in the figure), and we can underestimate their areas separately. We reduce the general case to this easy case.}
\label{figure: cross_easy}
\end{center}
\end{figure}
we easily get \eqref{equation: overlap}, 
because in this case, 
the $n ^2$ overlaps $R _{f _i} \cap R _{g _j}$ are all disjoint, 
and each of them has area $\geq (W / n) ^2 / D$. 
In fact, instead of \eqref{equation: much below}, 
it would have sufficed to assume 
\begin{align}
\label{equation: below}
\underline{g _j} (\min I) \geq \overline{f _i} (\min I), & 
&
\underline{f _i} (\max I) \geq \overline{g _j} (\max I). 
\end{align}

We will reduce the general case 
to this easy special case. 
That is, 
starting from $\{f _1, \ldots, f _n\}$ and $\{g _1, \ldots, g _n\}$ 
that satisfy \eqref{equation: much below}, 
we define simple $\{f' _1, \ldots, f' _n\}$ and $\{g' _1, \ldots, g' _n\}$ 
satisfying \eqref{equation: below}, 
such that 
$R _{f' _1} \cup \dots \cup R _{f' _n} \supseteq R _{f _1} \cup \dots \cup R _{f _n}$ and 
$R _{g' _1} \cup \dots \cup R _{g' _n} \supseteq R _{g _1} \cup \dots \cup R _{g _n}$ 
(note that by these containments, 
the bound \eqref{equation: overlap} for the $f' _i$ and the $g' _j$
implies \eqref{equation: overlap} for the $f _i$ and the $g _j$). 

We describe how we modify $\{f _1, \ldots, f _n\}$ 
to obtain $\{f' _1, \ldots, f' _n\}$ 
(modification on $\{g _1, \ldots, g _n\}$
is done similarly and independently). 
First, 
we make sure that the functions $f _i$ never switch relative positions, 
by exchanging the roles of two intervals 
at every time $\alpha$ at which one overtakes another. 
This way, we ensure that at each time $\alpha$, 
the intervals $f _1 (\alpha)$, \ldots, $f _n (\alpha)$ are in ascending order. 

These intervals may still overlap one another. 
So we push them upwards one by one as necessary 
to avoid previous intervals. 
Specifically, define $f' _1$ by $f' _1 (\alpha) = f _1 (\alpha)$, 
and then $f' _i$ for $i = 2$, $3$, \ldots by
$\underline{f' _i} (\alpha) = \max \{\underline{f _i} (\alpha), \max \overline{f' _{i - 1}} (\alpha) \}$. 

The functions remain $\frac 1 2 D$-Lipschitz, 
since the resulting $f' _i$ 
at each time $\alpha$
moves at the same speed as one of the original $f _j$. 
The condition \eqref{equation: below} is also satisfied, 
since initially we had \eqref{equation: much below} 
and then we moved each $f' _i$ upwards by at most $W$. 
\end{proof}

\begin{proof}[Proof of Lemma~\ref{lemma: segment groups}]
Let $\mathcal B ^-$, $\mathcal B ^+$ and $R _b$
be as at the beginning of Section~\ref{section: proof of lemmas}. 
As explained there, 
our goal it to prove 
\eqref{equation: parts facing each other half}, 
which says that 
the area of the overlap 
between $R _b$ for $b \in \mathcal B ^- \cup \mathcal B ^+$ 
is at least $W ^2 / D$. 
We claim that this is true 
\emph{even if we replace each $R _b$ by its subset $\tilde R _b$
defined below}. 

Let $I := [\frac \pi 2 - \kappa, \frac \pi 2 + \kappa]$. 
Note that, because of the configuration of segments (Figure~\ref{figure: segment_groups}), 
we have $\lvert b (\alpha) \rvert \geq l \sin (\lambda - \kappa) = W / n$
for each $\alpha \in I$ and $b \in \mathcal B ^- \cup \mathcal B ^+$. 
We define a subset of $R _b$ (see \eqref{equation: shadow})
by restricting $\alpha$ to $I$ and replacing the interval $b (\alpha)$
by its subinterval $
 \tilde b (\alpha) 
:=
 [\min (b (\alpha)), \min (b (\alpha)) + W / n]
$: 
\begin{equation}
 \tilde R _b 
:=
 \{\, (\alpha, v) \in I \times \Rset : v \in \tilde b (\alpha) \,\}. 
\end{equation}
Our claim was that 
the total area of pairwise overlaps between $\tilde R _b$
for $b \in \mathcal B ^- \cup \mathcal B ^+$ is at least $W ^2 / D$. 
But this is 
Lemma~\ref{lemma: crossing}
applied to 
$\{f _1, \dots, f _n\} = \{\, \tilde b : b \in \mathcal B ^- \,\}$, 
$\{g _1, \dots, g _n\} = \{\, \tilde b : b \in \mathcal B ^+ \,\}$. 
\end{proof}

\section{Half-line barriers}

Let us finally propose an analogous question, 
obtained by replacing lines by half-lines
in the definition of barriers: 
a set $B \subseteq \Rset ^2$ 
is a \emph{half-line barrier} of
$U \subseteq \Rset ^2$
if all half-lines intersecting $U$ intersect $B$. 
This intuitively means 
``hiding the object~$U$ from outside,'' 
which we find perhaps as natural, if not more, than the notion of opaque sets.
Similarly to Lemma~\ref{lemma: Jones lower bound}, we have

\begin{lemma}
$\lvert B \rvert \geq p$ for any 
rectifiable half-line barrier~$B$
of a convex set $U \subseteq \Rset ^2$ with perimeter $p$. 
\end{lemma}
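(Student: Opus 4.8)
The plan is to mimic Jones' argument for Lemma~\ref{lemma: Jones lower bound}, but with a one-sided projection bookkeeping that accounts for the orientation of half-lines. First I would reduce to straight barriers via Lemma~\ref{lemma: segment barriers}: since a half-line barrier is in particular required to block all half-lines, and the magnification/polygonal-approximation construction in that lemma produces a barrier whose convex hull contains the original curve in its interior, it also blocks all half-lines meeting the original curve; so it suffices to prove the bound when $B$ is a countable union of segments.

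The key step is a directed analogue of the integral identity used for Lemma~\ref{lemma: Jones lower bound}. For a direction $\alpha \in [0, 2\pi)$, instead of projecting orthogonally, I would parametrize half-lines by their direction $\alpha$ and consider, for each such direction, the set of half-lines pointing in direction $\alpha$ that meet $U$; saying $B$ blocks all of them translates into a covering condition on a one-dimensional projection. Concretely, fix the unit vector $u_\alpha = (\cos\alpha, \sin\alpha)$ and project onto the line orthogonal to $u_\alpha$; a half-line in direction $u_\alpha$ is determined by this projected coordinate together with a starting point, and it meets $U$ iff its projected coordinate lies in $U(\alpha + \pi/2)$ (the orthogonal projection of $U$). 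The half-line barrier condition forces: for each $\alpha$ and each projected coordinate $t \in U(\alpha+\pi/2)$, the set $B$ must contain a point on the $u_\alpha$-half-line through $t$ that is ``not beyond'' all of $U$ — i.e. $B$ must block the half-line emanating from the far side. Integrating the elementary fact that a single segment $b$ contributes $|b|\cdot|\cos(\alpha - \theta_b)|$ to the length of its projection, but now integrating $\alpha$ over the full circle $[0,2\pi)$ while each half-line direction is counted once, I get $\int_0^{2\pi}|U(\alpha+\pi/2)|\,\dee\alpha \le \sum_b |b| \int_0^{2\pi} |\cos(\alpha-\theta_b)|\,\dee\alpha = 4|B|$. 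The left-hand side is $\int_0^{2\pi} |U(\beta)|\,\dee\beta$, which for convex $U$ equals twice the perimeter; but wait — here we want it to equal the perimeter $p$, not $2p$. The resolution is exactly the orientation asymmetry: a half-line in direction $\alpha$ and one in direction $\alpha+\pi$ are genuinely different half-lines, and only one of the two needs to be blocked ``on the near side'' by $B$ while still forcing $B$ to project over $U(\alpha+\pi/2)$; so each \emph{line} direction is effectively counted twice as a half-line direction, giving $\int_0^{2\pi}|U(\alpha+\pi/2)|\,\dee\alpha = 4|B|$ with the left side being $2 \times (\text{perimeter})$...

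Let me restate the mechanism cleanly, since this is the crux. For a convex $U$ with perimeter $p$, Cauchy's formula gives $\int_0^\pi |U(\beta)|\,\dee\beta = p$ (the half-circle suffices because $|U(\beta)| = |U(\beta+\pi)|$). For a segment $b$, $\int_0^\pi |\cos(\alpha-\theta_b)|\,\dee\alpha = 2$. So if I can show $|U(\alpha+\pi/2)| \le |B(\alpha+\pi/2)|$ for $\alpha$ ranging over just a half-circle using the half-line condition, and that the relevant projection of $B$ decomposes as $\sum_b |b(\alpha)|$ — which it does for straight barriers — then integrating over $[0,\pi]$ gives $p \le \int_0^\pi |B(\beta)|\,\dee\beta \le \sum_b |b|\cdot 2 = 2|B|$, which is only $|B| \ge p/2$, not what we want. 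To actually get $|B| \ge p$, I need to use that the \emph{half}-line condition is strictly stronger than the line condition: blocking the half-line in direction $u_\alpha$ through $t$ requires a barrier point on the portion of that half-line \emph{before} it exits $U$'s far boundary — equivalently, on the ``incoming'' side. The correct integral is therefore over all $\alpha \in [0,2\pi)$: for each $\beta$, the projection coordinates in $U(\beta)$ must be covered by $B$ restricted to the half-space on the $-u_\beta$ side of (a supporting configuration of) $U$, and separately on the $+u_\beta$ side, and these two half-space restrictions of $B$ together have projected length $\ge 2|U(\beta)|$ only after integrating — no, rather, for each fixed $\beta$, $B$ projected onto the $\beta$-line must cover $U(\beta)$, but the half-line condition in direction $+u_\alpha$ uses $B$ and in direction $-u_\alpha$ uses $B$ again, and summing $|b(\alpha)|$ over $\alpha \in [0,2\pi)$ gives $4|b|$. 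Hence $\int_0^{2\pi}|U(\beta)|\,\dee\beta = 2p \le 4|B|$ would again give $p/2$; the genuine gain is that the half-line through $t$ in direction $+u_\alpha$ and the one in direction $-u_\alpha$ must \emph{both} hit $B$, at points with different signs of the $u_\alpha$-coordinate relative to $U$, so $b$'s projection onto the $\beta = \alpha + \pi/2$ line is counted once for direction $+u_\alpha$ and once for $-u_\alpha$ — this is the doubling. So: $\int_0^{2\pi}\big(\text{coverage requirement at direction }\alpha\big)\,\dee\alpha$ has integrand $\ge |U(\alpha+\pi/2)|$ contributed \emph{for each of the two half-line orientations}, i.e. total $\ge 2|U(\alpha+\pi/2)|$ pointwise, giving $\int_0^{2\pi} 2|U(\alpha+\pi/2)|\,\dee\alpha = 4p$ on the left and $\le 4|B|$ on the right, hence $|B| \ge p$.

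\textbf{Main obstacle.} The delicate point is making the ``for each half-line orientation, $B$ must cover $U(\beta)$ on the appropriate side, and these two contributions are both charged to $\sum_b |b(\beta)|$'' argument rigorous — in particular, checking that a single barrier segment $b$ really is counted with full weight $\int_0^{2\pi}|\cos(\alpha-\theta_b)|\,\dee\alpha = 4|b|$ rather than being ``shared'' between the two orientations in a way that halves its contribution. I would handle this by the clean device: for each direction $\alpha \in [0,2\pi)$, let $H_\alpha$ be a closed half-plane whose bounding line supports $U$ and has outward normal $u_\alpha$, so that $U \subseteq H_{\alpha}$ and $U \subseteq H_{\alpha+\pi}$ with $H_\alpha \cap H_{\alpha+\pi}$ a slab just containing $U$. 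A half-line in direction $u_\alpha$ meeting $U$ must cross the supporting line of $H_{\alpha+\pi}$ while still outside or on the boundary, so its first barrier point — which exists — lies in $H_{\alpha+\pi}$... the bookkeeping that the segments of $B \cap H_{\alpha+\pi}$, projected, cover $U(\alpha+\pi/2)$, and that integrating $\sum_b |b \cap H_{\alpha+\pi}|\cdot\text{(something)}$ over $\alpha$ recovers $4|B|$ after also accounting for the opposite orientation using $H_\alpha$. Verifying that the two families $\{B \cap H_{\alpha+\pi}\}$ and $\{B\cap H_\alpha\}$ partition the circle's worth of ``projection credit'' so as to yield exactly $4|B|$ total (not $2|B|$) is the one computation I would carry out with care; everything else is the routine adaptation of the Lemma~\ref{lemma: Jones lower bound} proof together with Lemma~\ref{lemma: segment barriers}.
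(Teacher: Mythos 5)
The paper gives no written proof of this lemma (it is stated with ``similarly to Lemma~\ref{lemma: Jones lower bound}''), so I am comparing your proposal against the intended adaptation of Jones' argument. Your final mechanism is the right one: for each projection direction $\beta$ the barrier must cover $U(\beta)$ \emph{twice}, once for each orientation of half-lines, so the projection counted with multiplicity satisfies $\sum_b \lvert b(\beta)\rvert \geq 2\lvert U(\beta)\rvert$, and integrating over $[0,2\pi)$ gives $4\lvert B\rvert \geq 2\int_0^{2\pi}\lvert U(\beta)\rvert\,\dee\beta = 4p$. But the step you yourself flag as ``the one computation I would carry out with care'' is a genuine gap, and the device you propose for it fails. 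The supporting half-planes $H_\alpha$ and $H_{\alpha+\pi}$ intersect in the whole slab containing $U$, so $B\cap H_\alpha$ and $B\cap H_{\alpha+\pi}$ are far from disjoint --- every barrier point in that slab (in particular every barrier point inside $U$) lies in both --- and its projected length gets credited to both orientations; the bookkeeping then only yields $2\sum_b\lvert b(\beta)\rvert \geq 2\lvert U(\beta)\rvert$, i.e.\ $\lvert B\rvert\geq p/2$ again. The correct decomposition is per line, not per supporting line: for a line $\ell$ in direction $u_\alpha$ meeting the interior of $U$, write $\ell\cap U=[s_0,s_1]$ with $s_0<s_1$ in the $u_\alpha$-coordinate; the closed half-lines $\{s\geq s_1\}$ and $\{s\leq s_0\}$ each meet $U$, hence each must contain a point of $B$, and they are disjoint. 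Taking $E^+$ (resp.\ $E^-$) to be the union over $t\in U(\beta)$ of the portion of $\ell_t$ at or beyond (resp.\ at or before) the chord $\ell_t\cap U$, the sets $E^+$ and $E^-$ are disjoint up to a null set, each of $(B\cap E^\pm)(\beta)$ covers $U(\beta)$, and since orthogonal projection is injective on each segment $b$ with $\lvert b(\beta)\rvert>0$ one gets $\lvert b(\beta)\rvert\geq\lvert(b\cap E^+)(\beta)\rvert+\lvert(b\cap E^-)(\beta)\rvert$; summing over $b$ gives the needed $\sum_b\lvert b(\beta)\rvert\geq 2\lvert U(\beta)\rvert$. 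Without this (or an equivalent) disjointness argument your proof does not close.

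A secondary gap: your appeal to Lemma~\ref{lemma: segment barriers} is not valid as stated. That construction only guarantees that the convex hull of the polygonal approximation $\beta''$ contains $\mathrm{conv}(\beta)$ in its interior, which blocks every \emph{line} through $\beta$ but not every \emph{half-line}: if $\beta$ is a ``V'' and $h$ is the vertical ray starting at a point on one arm and leaving through the opening of the V, then $h$ meets $\beta$ but misses the magnified copy $\beta'$ and hence a fine approximation $\beta''$. This is repairable --- for instance, run the multiplicity-counting projection argument directly on rectifiable curves using the a.e.\ tangent direction, so that no reduction to segments is needed --- but it does need repairing rather than being cited as automatic.
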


Thus, unlike for line barriers, 
the question is completely answered when $U$ is connected: 
the shortest half-line barrier 
is the boundary of the convex hull. 

If $U$ is disconnected, 
there can be shorter half-line barriers. 
For example, if $U$ consists of 
two connected components that are enough far apart from each other, 
it is more efficient to cover them separately 
than together. 
One might hope that an optimal half-line barrier
is always obtained by grouping the connected components of $U$ in some way and 
taking convex hulls of each. 
This is not true, as the example in Figure~\ref{figure: disjoint} shows.
\begin{figure}[t]
\begin{center}
\includegraphics[scale=0.9]{./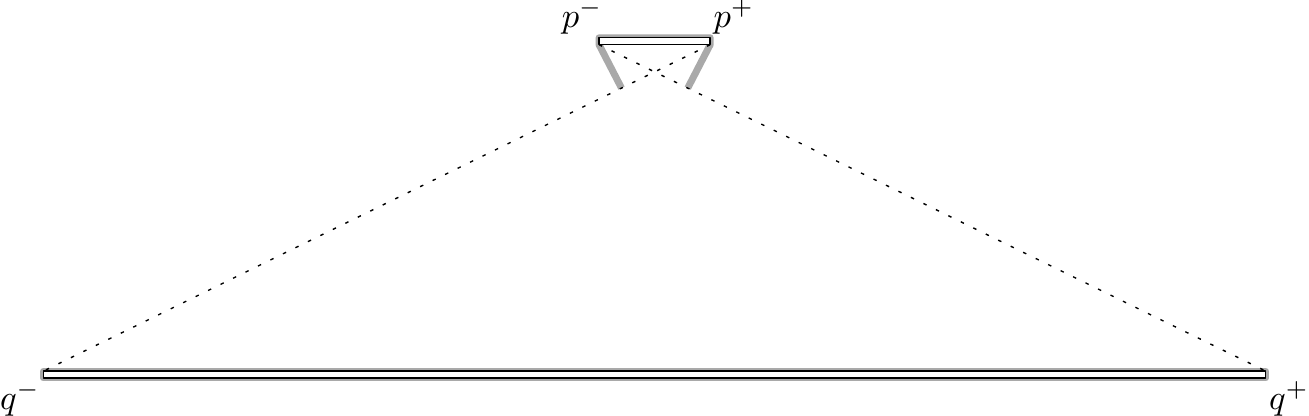}
\caption{Consider the line segments $p ^- p ^+$ and $q ^- q ^+$, 
where $p ^\pm = (\pm 1, 8)$ and $q ^\pm = (\pm 15, 0)$, 
and let $U$ 
be the union of these segments with small ``thickness'': 
$U$ consists of 
a rectangle with vertices $(\pm 1, 8 \pm \varepsilon)$ 
and another with vertices $(\pm 15, \pm \varepsilon)$, 
for a small $\varepsilon > 0$. 
The boundaries of these thick line segments 
have total length $64$ (plus a small amount due to the thickness). 
The boundary of the convex hull of all of $U$ has length
$2 + 30 + 2 \sqrt{260} > 64.24$ (plus thickness). 
But we have another half-line barrier
depicted above in gray, 
whose total length is $
 2 + 60 + 2 / \sqrt 5 + 2 / \sqrt 5 
< 
 63.79
$ (plus thickness, which can be made arbitrarily small).}
\label{figure: disjoint}
\end{center}
\end{figure}
We have not been able to find an algorithm 
that achieves a nontrivial approximation ratio for this problem. 

\subsubsection*{Acknowledgements.}
We are grateful to G\'abor Tardos for many interesting discussions on the subject. In particular, the present proof of Lemma~\ref{lemma: segment barriers} is based on his idea.

\end{document}